\documentclass[11pt]{article}


\include{_preamble}

\makeatletter
\@ifclassloaded{svjour3}{%
  \begin{document}
  \title{\titlepaper
  }
  \titlerunning{Evaluating Longitudinal Modified Treatment Policies Under
    Competing Risks}
  \date{Received: date / Accepted: date}
  \maketitle
}{%
  \title{\titlepaper}
  \date{\today}
  \author{\authorlist}
  \begin{document}
  \maketitle
}
\makeatother

\begin{abstract}
  
  Longitudinal modified treatment policies (LMTP) have been recently
  developed as a novel method to define and estimate causal parameters
  that depend on the natural value of treatment. LMTPs represent an
  important advancement in causal inference for longitudinal studies
  as they allow the non-parametric definition and estimation of the
  joint effect of multiple categorical, ordinal, or continuous
  treatments measured at several time points. We extend the LMTP
  methodology to problems in which the outcome is a time-to-event
  variable subject to a competing event that precludes observation of
  the event of interest. We present identification results and
  non-parametric locally efficient estimators that use flexible
  data-adaptive regression techniques to alleviate model
  misspecification bias, while retaining important asymptotic
  properties such as $\sqrt{n}$-consistency. We present an application
  to the estimation of the effect of the time-to-intubation on acute
  kidney injury amongst COVID-19 hospitalized patients, where death by
  other causes is taken to be the competing event.
\end{abstract}

\section{Introduction}

In survival analysis, a competing event is one whose realization
precludes the occurrence of the event of interest. As an example,
consider a study on the effect of mechanical ventilation of
hospitalized COVID-19 patients on the onset of acute kidney injury,
where the death of a patient by extraneous causes may be considered a
competing event. Commonly used methods for analyzing time-to-event
outcomes under competing events include estimating the cause-specific
hazard function~\citep{prentice1978analysis}, the sub-distribution
function~\citep[also known as cumulative
incidence;][]{fine1999proportional}, or treating the competing event
as a censoring event. Early techniques for the estimation of these
parameters were developed in the context of the Cox proportional
hazards models~\citep{cox1972regression, andersen1982cox}. More recent
methods for analyzing survival data include joint
modeling~\citep{henderson2000joint} and a multitude of other
parametric and semi-parametric
methods~\citep[e.g.,][]{zeger1992overview}.

While these methods have allowed much progress in applied research,
they have two important limitations: (i) they lack a formal framework
that clarifies the conditions required for a causal interpretation of
the estimates, and (ii) they typically rely upon parametric modeling
assumptions that can induce non-negligible bias in the effect
estimates, even when the assumptions required for causal
identification hold. Recent developments in causal inference and
semi-parametric estimation have yielded important progress in
addressing these significant limitations. For example, when the
treatment is binary,~\citet{young2020causal} present a study of the
definition and identification of several causal effect parameters in
the presence of competing risks, and~\citet{benkeser2018improved}
and~\citet{rytgaard2021one} describe semi-parametric--efficient
estimators that leverage the flexibility of data-adaptive regression
procedures in order to mitigate model misspecification bias while
retaining asymptotic properties critical to reliable statistical
inference (e.g., $\sqrt{n}$-consistency). Among other
insights,~\cite{young2020causal} clarify that cumulative incidence
effects may be interpreted as a total effect of exposure operating
through pathways that include the competing event, whereas treating
the competing event as a censoring event quantifies direct effects
operating independent of the competing event. Furthermore, they
clarify that identification when treating the competing event as a
censoring event entails considering interventions that eliminate the
competing event, and therefore require no unmeasured confounding of
the competing event--outcome relation. This assumption is not required
to identify the cumulative incidence effect, which is the focus of
this work. The reader interested in further issues surrounding the
definition and usefulness of these effects in different applications
is encouraged to read the original research article
\citep{young2020causal}.

Under a counterfactual causal framework, defining causal effects
requires considering outcomes under hypothetical interventions on the
treatment of interest. In the case of binary treatments, causal
effects are often defined as differences between the expected outcomes
under hypothetical interventions that assign the treatment to all
units or to none --- that is, the average treatment effect. Causal
effects defined through such deterministic interventions are limited
and or lack a basis for scientific interpretation in many
applications, including problems where the treatment of interest is
numerical rather than categorical, where the treatment is
multivariate, where the treatment is a time-to-event-variable (such as
time-to-intubation as in our illustrative application), or when
treatment assignment is deterministic conditional on potential
confounders (cf.~positivity/overlap assumptions).

As a solution to the definition of relevant causal effects for these
problems, recent work has focused on stochastic interventions and
modified treatment policies~\citep{stock1989nonparametric,
  robins2004effects, Diaz12, Haneuse2013,richardson2013single,
  young2014identification}. Stochastic interventions inquire what
would have happened in a hypothetical world where the
post-intervention treatment is a random draw from a user-specified
distribution conditional on covariates. Examples of stochastic
interventions include interventional mediation
effects~\citep{vanderweele2014effect, diaz2021nonparametric,
  hejazi2022nonparametric}, as well as incremental propensity score
interventions whose identification does not require the positivity
assumption~\citep{kennedy2019nonparametric,
  wen2021intervention}. Modified treatment policies (MTPs) generalize
stochastic interventions and allow the post-intervention treatment to
also depend on the natural value of treatment~\citep[for an overview
of such interventions, see][]{young2014identification}. MTPs allow the
definition and estimation of parameters defined in terms of natural
research questions, such as inquiring what the incidence of wheezing
among children with asthma would have been had the concentration of
PM\textsubscript{2.5} particles been 5\% lower than they actually
were, or what the mortality rate due to opioid overdose would have
been had naloxone access laws been implemented a year
earlier than they were actually implemented~\citep{rudolph2021effects}.

Identifying assumptions for causal effects of longitudinal modified
treatment policies using the extended g-formula were first articulated
by~\citet{richardson2013single}
and~\citet{young2014identification}. Recently,~\citet{diaz2021lmtp}
developed sequential regression and sequentially doubly robust
estimators for longitudinal modified treatment policies (LMTP). These
estimators are generalizations of estimators developed
by~\citet{diaz2018stochastic} for MTPs in the single time-point case,
and of estimators developed by~\citet{luedtke2017sequential}
and~\citet{rotnitzky2017multiply} for static interventions in the
multiple time-point case. Here, we extend the LMTP methodology for the
estimation of the cumulative incidence under competing
events, illustrating how LMTPs can be used to solve problems whose
solution has remained elusive in causal inference, such as
non-parametric definition and estimation of effects for continuous,
multi-valued, time-valued, or multivariate exposures in the context
of time-varying exposures. In our motivating example we illustrate
estimation of the effect of time-to-intubation on time to acute kidney
injury with death by other causes acting as a competing risk.

Our estimation strategy is rooted in recent developments allowing the estimation
of causal effects using flexible regression techniques from machine
learning~\citep{vanderLaanRose11, vanderLaanRose18, chernozhukov2017double}.
Central to this theory is the notion of von-Mises
expansion~\citep[e.g.,][]{mises1947asymptotic, vanderVaart98,
robins2009quadratic}, which together with sample-splitting
techniques~\citep{bickel1982adaptive, klaassen1987consistent, zheng2011cross}
allows the use of flexible regression methods for the estimation of nuisance
parameters while still allowing the construction of efficient and
$\sqrt{n}$-consistent estimators.

\section{Defining LMTP effects under competing risks}\label{sec:notation}

Assume that at each time point $t=1,\ldots\tau$, we measure
$X_t=(D_t, Y_t, L_t, A_t, C_t)$ on each of $i=1,\ldots, n$ study
participants, where $L_t$ denotes time-varying covariates, $A_t$
denotes a vector of treatments at time $t$, $C_t$ denotes an indicator
of loss-to-follow-up (equal to one if the unit remains on study at
time $t+1$ and equal to zero otherwise), $D_t$ denotes an indicator of
the competing event occurring on or before time $t$, and \corr{$Y_t$
  denotes an indicator of the event of interest not having occurred on
  or by time $t$.} Baseline covariates are included in $L_1$. At the
end of the study, we measure the outcome of interest $Y_{\tau+1}$. We
let \corr{$D_1=0$ and $Y_1=1$,} meaning that participants have not
experienced the event of interest nor the competing event at the
beginning of the study. We let \corr{$R_t=\one\{D_t=0, Y_t=1\}$}
denote an indicator that neither the event of interest nor the
competing event have occurred by time $t$, \corr{and
  $Z_t=\one\{D_t=1, Y_t=1\}$ denote an indicator that the event of
  interest has not occurred but the competing event has}. Assume
monotone loss-to-follow-up so that $C_t=0$ implies $C_k=0$ for all
$k>t$, in which case all data become degenerate for $k > t$. In this
work, we are interested in a scenario where occurrence of the
competing event precludes occurrence of the event of interest. For
example, in a study looking to assess the effect of hypertension
medications on the likelihood of a stroke, death for reasons other
than a stroke is a competing event. In this notation, we have
\corr{$Y_t=0$ if $Y_{t-1}=0$}, $D_t=1$ if $D_{t-1}=1$, \corr{and $Y_t=1$ if
$D_t=1$ and $Y_{t-1}=1$}, since the competing event precludes occurrence of the event
of interest. Let $X_i = (X_{1,i}, \ldots, X_{\tau, i})$ denote the
observed data for subject $i$ across all time points
$t=1,\ldots\tau$. We let $\P f = \int f(x)\dd \P(x)$ for a given
function $f(x)$. We use $\Pn$ to denote the empirical distribution of
$X_1,\ldots\,X_n$, and assume $\P$ is an element of the nonparametric
statistical model defined as all continuous densities on $X$ with
respect to a dominating measure $\nu$. We let $\E$ denote the
expectation with respect to $\P$, i.e.,
$\E\{f(X)\} = \int f(x)\dd\P(x)$. We also let $||f||^2$ denote the
$L_2(\P)$ norm $\int f^2(x)\dd\P(x)$. We use
$\bar W_t = (W_1,\ldots, W_t)$ to denote the past history of a
variable $W$, use $\ubar W_t = (W_t,\ldots, W_\tau)$ to denote the
future of a variable, and use
$H_t = (\bar D_t, \bar Y_t, \bar L_t, \bar A_{t-1})$ to denote the
history of the time-varying covariates and the treatment process until
just before time $t$. $\bar W$ denotes the entire history
$(W_1,\ldots,W_\tau)$ of a variable. By convention, variables with an
index $t\leq 0$ are defined as the null set, expectations conditioning
on a null set are marginal, products of the type
$\prod_{t=k}^{k-1}b_t$ and $\prod_{t=0}^0b_t$ are equal to one, and
sums of the type $\sum_{t=k}^{k-1}b_t$ and $\sum_{t=0}^0b_t$ are equal
to zero. For vectors $u$ and $v$, $u\le v$ denotes point-wise
inequalities.

We formalize the definition of the causal effects using a
non-parametric structural equation model
\citep{Pearl00}. Specifically, for each time point $t$, we assume the
existence of deterministic functions $f_{D,t}$, $f_{Y,t}$, $f_{L,t}$,
$f_{A,t}$, $f_{C,t}$, such that
\begin{align*}
  D_t&=f_{D,t}(C_{t-1}, A_{t-1}, H_{t-1}, U_{D,t})\\
  Y_t&=f_{Y,t}(D_t, C_{t-1}, A_{t-1}, H_{t-1}, U_{Y,t})\\
  L_t&=f_{L,t}(Y_t, D_t, C_{t-1}, A_{t-1}, H_{t-1}, U_{Y,t})\\
  A_t&=f_{A,t}(H_t, U_{A,t})\\
  C_t&=f_{C,t}(A_t, H_t, U_{C,t}).
\end{align*}
Here $U=(U_{L,t}, U_{A,t}, U_{C,t}, U_{D,t},U_{Y,t}:t\in
\{1,\ldots,\tau+1\})$ is a vector of exogenous variables. Independence
assumptions on these errors necessary for identification will be
clarified in \S\ref{sec:iden}. Without loss of generality, we assume
that variables which are undefined are equal to zero (e.g., $L_t=0$ if
$C_k=1$ for any $k<t$).



We are concerned with the definition and estimation of the causal
effect of an intervention on the treatment process $\bar A$ on the
event indicator $Y_{\tau+1}$ in a hypothetical world where there is no
loss to follow-up (i.e., $\bar C=0$). The interventions on the
treatment process are defined in terms of longitudinal modified
treatment policies \citep{Diaz12,Haneuse2013}, which are hypothetical
interventions where the treatment is assigned as a new random variable
$A_t^\d$ (which may depend on the natural value of treatment as
explained below) instead of being assigned according to the structural
equation model. An intervention that sets the treatments up to time
$t-1$ to $\bar A_{t-1}^\d$ generates a counterfactual variable
$A_t(\bar A_{t-1}^\d)$, which is referred to as the \textit{natural
  value of treatment} \citep{richardson2013single,
  young2014identification}, and represents the value of treatment that
would have been observed at time $t$ under an intervention carried out
up until time $t-1$ but discontinued thereafter. An intervention on
all the treatment and censoring variables up to $t=\tau$ generates a
counterfactual outcome $Y_{\tau+1}(\bar A^\d)$. Causal effects are
defined as contrasts between the counterfactual probability
$\P[Y_{\tau+1}(\bar A^\d)=1]$ \corr{of not experiencing a failure
  event by the end of the study} under interventions implied by
different interventions $\d$, as defined below.

The causal effects we define are characterized by a user-given
function $\d(a_t, h_t,\varepsilon_t)$ that maps a given treatment
value $a_t$ (i.e., the ``natural value'') and a history $h_t$ into a
new treatment value.  The function $\d$ is also allowed to depend on a
randomizer $\varepsilon_t$.  This definition generalizes static
interventions (if $\d(a_t,h_t,\varepsilon_t)$ is a constant), dynamic
interventions (if $\d(a_t,h_t,\varepsilon_t)$ depends on $h_t$ but not
on $a_t$ nor $\varepsilon_t$), and stochastic interventions (if
$\d(a_t,h_t,\varepsilon_t)$ depends on $h_t$ and $\varepsilon_t$ but
not on $a_t$). In what follows we will assume that the randomizer is
drawn independently across units and independently of $U$, and whose
distribution does not depend on $\P$. 
For fixed values $\bar a_t$, and $\bar l_t$, we
recursively define $a_t^\d=\d(a_t, h^\d_t, \varepsilon_t)$, where
$h^\d_t=(\bar a_{t-1}^\d, \bar l_t)$. The intervention is undefined if
$h^\d_t$ is such that $d_s^\d=1$ or $y_s^\d=1$ for any $s\leq t$. The
LMTPs that we study are thus defined as $A^\d_t = \d(A_t(\bar
A_{t-1}^\d), H_t(\bar A_{t-1}^\d), \varepsilon_t)$ for a user-given
function $\d$. In what follows, we let $\g_{A,t}(a_t \mid h_t)$ denote
the density or probability mass function of $A_t$ conditional on
$C_{t-1}=R_t=1$ and $H_t=h_t$, and we let $\g_{C,t}(a_t, h_t)$ denote
$\P(C_t=1\mid C_{t-1}=R_t=1, A_t=a_t, H_t=h_t)$. Furthermore, we use
$\g_t^\d(a_t \mid h_t)$ to denote the density or probability mass
function of $\d(A_t,H_t,\varepsilon_t)$ conditional on $C_{t-1}=R_t=1$
and $H_t=h_t$.

To ground the ideas and illustrate the usefulness of this framework, we now
consider several examples of regimes $\d$ that yield interesting and
scientifically informative causal contrasts. More examples appear in
\citet{robins2004effects,young2014identification,diaz2021lmtp}.

\begin{example}[Additive shift LMTP]
  Let $A_t$ denote a vector of numerical treatments, such as a drug dose or
  energy expenditure through physical activity. Assume that $A_t$ is supported
  as $\P(A_t \leq u_t(h_t)\mid H_t = h_t)=1$ for some $u_t$. Then, for
  a user-given value $\delta$, we let
  \begin{equation}\label{eq:defdshift}
    \d(a_t,h_t)=
    \begin{cases}
      a_t + \delta & \text{if } a_t+\delta \leq u_t(h_t) \\
      a_t & \text{if } a_t +\delta > u_t(h_t).
    \end{cases}
  \end{equation} We define
  $A_t^\d=\d(A_t(\bar A_{t-1}^\d), H_t(\bar A_{t-1}^\d))$. This
  intervention has been discussed by~\citet{Diaz12},
  \citet{diaz2018stochastic}, and~\citet{Haneuse2013} in the context
  of single time-point treatments; by~\citet{diaz2020causal}
  and~\citet{hejazi2022nonparametric} in the context of causal
  mediation analysis; and by~\citet{hejazi2020efficient} in the
  context of studies with two-phase sampling designs. This
  intervention considers hypothetical worlds in which the natural
  treatment at time $t$ is increased by a user-given value $\delta$,
  whenever such an increase is feasible for a unit with history
  $H_t(A_{t-1}^\d)$. Specifically, note that if the treatment is
  supported in the real numbers, the MTP could simply be defined as
  $\d(a_t,h_t) = a_t+\delta$. However, when the treatment is supported
  as $\P(A_t \leq u_t(h_t)\mid H_t = h_t)=1$, the intervention
  $\d(a_t,h_t) = a_t+\delta$ will not be feasible for units for which
  $a_t+\delta > u_t(h_t)$. We therefore set $\d(a_t,h_t) = a_t$ for
  those units. Note that setting $\d(a_t,h_t) = u_t(h_t)$ could also
  be an interesting option for these units. However, this option would
  lead to a parameter that violates Condition~\ref{ass:inv} below,
  which is necessary to achieve $\sqrt{n}-$consistent estimation.
\end{example}

\begin{example}[Multiplicative shift LMTP]\label{ex:mshift}
  Let $A_t$ denote a vector of numerical treatments such as pollutant
  concentrations for various pollutants. To define this intervention, assume
  that $A_t$ is supported as $\P(A_t \geq l_t(h_t)\mid H_t = h_t)=1$ for some
  $l_t$. Then, for a user-given value $0 < \delta < 1$, we let
  \begin{equation}\label{eq:mshift}
    \d(a_t,h_t)=
    \begin{cases}
      a_t \times \delta & \text{if } a_t \times \delta \geq l_t(h_t)  \\
      a_t & \text{if } a_t \times \delta < l_t(h_t).
    \end{cases}
  \end{equation}
  The intervention implied by this function considers hypothetical
  worlds where the natural treatment at time $t$ is increased by a
  user-given factor $\delta$, whenever such increase is feasible for a
  unit with history $H_t(A_{t-1}^\d)$. This intervention seems useful,
  for example, to solve current problems in environmental epidemiology
  related to the effect of environmental mixtures
  \citep{gibson2019overview}.
\end{example}

\begin{example}[Incremental propensity score interventions based on
  the odds and risk ratio]\label{ex:ipsi}
  \citet{kennedy2019nonparametric} proposed an intervention tailored
  to binary treatments in which the post-intervention treatment is a
  draw from a user-given distribution $\g_t^\d(a_t\mid h_t)$, where
  this distribution is chosen such that the odds ratio comparing the
  likelihood of treatment under intervention versus the actual
  treatment is a user-given value $\delta$. Specifically, consider
  $\varepsilon_t\sim U(0,1)$, and define $\d(h_t,
  \varepsilon_t)=\one\{\varepsilon_t< \g_t^\d(1\mid h_t)\}$ where we
  define $\g_t^\d(1\mid h_t) = \delta \g_t(1\mid h_t)/\{\delta
  \g_t(1\mid h_t)+\g_t(0\mid h_t)\}$. Identifying the effect of this
  intervention does not require the positivity assumption. Thus,
  incremental propensity score interventions are very useful in
  settings where violations of the positivity assumption preclude
  identification of static effects such as the average treatment
  effect.

  Recently, \cite{wen2021intervention} proposed a similar
  intervention that instead uses the risk ratio to quantify the
  likelihood of treatment under the intervention. Their intervention
  is defined in terms of a random draw from a distribution
  $\g_t^\d(a_t\mid h_t) = a_t\delta \g_t(1\mid h_t) +
  (1-a_t)(1-\delta \g_t(1\mid h_t))$ (compared to
  \cite{wen2021intervention} we have flipped the values 0 and 1 for
  ease of exposition). The following LMTP yields the same
  identifying functional as that of  \cite{wen2021intervention}:
  \begin{equation}\label{eq:defipsi}
    \d(a_t,\varepsilon_t)=
    \begin{cases}
      a_t & \text{if } \varepsilon_t < \delta  \\
      0 & \text{otherwise},
    \end{cases}
  \end{equation}
  where $\varepsilon_t$ is a random draw from the uniform distribution
  on $(0,1)$. As before, we make use of the definition
  $A_t^\d=\d(A_t(\bar A_{t-1}^\d), H_t(\bar A_{t-1}^\d))$. Note that
  the probability of treatment under the intervention is equal to
  $\g_\delta(1\mid h_t) = \delta \times \g(1\mid h_t)$; thus, giving a
  risk-ratio interpretation to $\delta$.  Importantly, a unit with
  zero probability of treatment conditional on their history also has
  zero probability of treatment under the intervention. This weakens
  the positivity assumptions required for identification and
  estimation, as we will discuss below.
\end{example}

\begin{example}[Dynamic treatment initiation strategies with grace
  period]\label{ex:grace}
  Let $A_t$ denote a binary treatment variable.  Consider an intervention of the
  form ``for a user-given grace period $m$, if a condition for treatment
  initiation is met at or before $t-m$ then start treatment at $t$; otherwise,
  do not intervene.'' Letting $L'_t$ denote an indicator that the condition has
  been met, this intervention can be operationalized as an LMTP as follows:
  \begin{equation}\label{eq:defgrace}
    \d(a_t,h_t)=
    \begin{cases}
      1 & \text{if } l'_{t-m} = 1  \\
      a_t & \text{otherwise}.
    \end{cases}
  \end{equation}
  This type of intervention has been considered in the context of treatment
  initiation for HIV~\citep{van2005history,cain2010start}, where the condition
  $L_t'$ is that CD4+ cell count drops below a pre-specified value. In this type
  of application, realistic treatment policies may involve a delay of treatment
  initiation for logistical reasons. Accordingly, allowing for a grace period
  quantifies an effect that is arguably of more practical relevance.
\end{example}

\begin{remark}
  \citet{wen2021intervention} study stochastic interventions defined
  as a random draw from a density that can be expressed as a mixture
  between a known density $\f_t(a_t\mid h_t)$ and the density of the
  observed data $\g_t(a_t\mid h_t)$. The functional identifying the
  mean counterfactual outcome under this stochastic intervention is
  the same as the functional identifying the counterfactual outcome
  under an LMTP defined by
  \begin{equation*}
    \d(a_t,h_t,\varepsilon_t)=\one\{\varepsilon_t\leq c_t(h_t)\} q_t +
    \one\{\varepsilon_t> c_t(h_t)\} a_t,
  \end{equation*}
  where $\varepsilon_t$ is a random draw from a uniform distribution
  in $[0,1]$, $0<c_t(h_t)<1$ is a mixture constant, and $q_t$
  represents a draw from $\f_t(a_t\mid h_t)$. Efficient estimators for
  this type of LMTP have been previously proposed
  by~\cite{diaz2021lmtp}. \citet{wen2021intervention} discuss the
  stochastic interventions version of these effects, but incorrectly
  claim that the estimators they develop are not particular cases of
  the methodology of~\cite{diaz2021lmtp}.
\end{remark}

\subsection{Identification of the effect of LMTPs under competing risks}\label{sec:iden}

The following assumptions will be sufficient to prove identification:

\begin{assumptioniden}[Supported treatment intervention]\label{ass:support1}
  If $(a_t,h_t)\in \supp\{A_t,H_t\}$ then
  $(\d(a_t,h_t,\varepsilon_t),h_t)\in \supp\{A_t,H_t\}$ for
  $t\in\{1,\ldots,\tau\}$ and all values of $\varepsilon_t$.
\end{assumptioniden}

\begin{assumptioniden}[Positivity of loss-to-follow-up mechanism]\label{ass:support2}
  $\P\{\g_{C,t}(A_t, H_t)>0\} = 1$ for $t\in\{1,\ldots,\tau\}$.
\end{assumptioniden}

\begin{assumptioniden}[Strong sequential randomization of treatment mechanism]\label{ass:exch1}
  $U_{A,t}\indep (\underline U_{D, {t+1}}, \underline U_{Y,
    {t+1}},\allowbreak \underline U_{L, {t+1}},\allowbreak \underline U_{A, t+1})
  \mid (H_t, R_t=C_t=1)$ for all $t\in\{1,\ldots,\tau\}$.
\end{assumptioniden}

\begin{assumptioniden}[Sequential randomization of loss-to-follow-up
  mechanism]\label{ass:exch2}
  $U_{C,t}\indep (\underline U_{D, {t+1}}, \underline U_{Y,
    t+1}, \underline U_{L, t+1},\allowbreak\underline U_{A, t+1}) \mid (A_t,H_t, R_t=1)$ for
  all $t\in\{1,\ldots,\tau\}$.
\end{assumptioniden}

Assumptions~\ref{ass:support1} and~\ref{ass:support2} are required so
that the interventions considered are supported in the data. In words,
Assumption~\ref{ass:support1} requires that if there is an individual
with treatment level $a_t$ and covariate values $h_t$ who is
event-free and uncensored at time $t$, there must also be an
individual with treatment level $\d(a_t,h_t,\varepsilon_t)$ and
covariate values $h_t$ who is event-free and uncensored at time
$t$. Assumption~\ref{ass:support1} would be violated if the
intervention $\d$ is infeasible at time $t$ for a participant with
covariate history $h_t$. As an example of this, consider our
illustrative application, letting $A_t$ be an indicator of having been
intubated at time $t$, and letting $H_t$ denote a high-dimensional
vector containing all the information collected for a patient up until
time $t$, including information on blood oxygen saturation. Consider a
modified treatment policy that would delay intubation by a number of
days $\delta$ for a large value $\delta$. Note that intubation is a
salvage therapy that is almost always used when a patient reaches a
low oxygen saturation point. Therefore, Assumption~\ref{ass:support1}
may be violated for this LMTP since a patient with very low blood
oxygen saturation who is intubated at time $t$ has a very low chance
of not being intubated in the vicinity of time $t$.

Assumption~\ref{ass:support2} states that for every possible value of
the covariates $h_t$ that has positive probability of occurring, there
are uncensored individuals.
Assumption~\ref{ass:exch1} is required to identify the effect of an LMTP. It
states that, among event-free uncensored individuals, $H_t$ contains all the
common causes of treatment at time $t$ and future events, treatments, and
covariates. Assumption~\ref{ass:exch2} is required to identify the effect of
a static intervention that prevents loss to follow-up. It states that, among
event-free uncensored individuals, $H_t$ contains all the common causes of
censoring at time $t$ as well as future events and covariates.

\begin{proposition}[Identification of the effect of LMTPs with
  time-to-event outcomes subject to competing events]\label{theo:iden}
  Set $\q_{\tau+1}= Y_{\tau+1}$, $R_{\tau+1}=1$ \corr{and $Z_{\tau+1}=0$}. For $t=\tau,\ldots,1$,
  recursively define
  \begin{equation}
    \corr{\q_t:
    (a_t, h_t)
    \mapsto \E\left[R_{t+1}\times\q_{t+1}(A_{t+1}^\d,
      H_{t+1}) + Z_{t+1}\mid C_t= R_t=1, A_t=a_t,
      H_t=h_t\right]},\label{eq:seqreg}
  \end{equation}
  and define $\theta = \E[\q_1(A_1^\d, L_1)]$. Under Assumptions
  \ref{ass:support1}-\ref{ass:exch2}, $\P[Y_{\tau+1}(\bar A^\d)=1]$ is identified as
  $\theta$.
\end{proposition}
The identification result above, proved in the supplementary materials, extends
the identification result of~\citet{young2020causal} to LMTPs, and the
identification result of~\cite{diaz2021lmtp} to the case of competing events.
This result allows researchers to estimate from data the counterfactual
probability $\P[Y_{\tau+1}(\bar A^\d)=1]$, interpreted as the outcome rate in
a hypothetical world where the modified treatment policy $\d$ was implemented.

Next, we present efficiency theory and efficient estimators for
$\theta$, based on an extension of our prior work~\citep{diaz2021lmtp}. These
estimators allow the use of slow-converging data-adaptive regression techniques
to obtain estimators of $\theta$ that converge at parametric rates. The
general approach involves finding a von-Mises-type
approximation~\citep{mises1947asymptotic, vanderVaart98,
robins2009quadratic, Bickel97} for the parameter $\q_{L, t}$, which can be
intuitively understood to be a first-order expansion with second-order error
(remainder) terms. As the errors in the expansion are second-order, the
resulting estimator of $\theta$ will be $\sqrt{n}$-consistent and asymptotically
normal as long as the second-order error terms converge to zero at rate
$\sqrt{n}$. This would be satisfied, for example, if all the regression
functions used for estimation converge at rate $n^{1/4}$ (see
\S\ref{sec:estima}). The reader interested in this general theory is encouraged
to consult the works of~\citet{buckley1979linear,rubin2007doubly,
Robins00,Robins&Rotnitzky&Zhao94,vanderLaan2003,Bang05, vdl2006targeted,
vanderLaanRose11, vanderLaanRose18, luedtke2017sequential,
rotnitzky2017multiply}.

\section{Efficient estimation via sequentially doubly robust
  regression}\label{sec:estima}

As noted by~\citet{diaz2021lmtp}, non-parametric $\sqrt{n}$-consistent
estimation is not possible for arbitrary functions $\d$ when the treatment is
continuous. Thus, we impose the following assumption, which is a generalization
of an assumption first used by~\citet{Haneuse2013}. In what follows, we will
assume that the function $\d$ does not depend on the distribution of the
observed data $\P$.

\begin{assumption}[Piecewise smooth invertibility]\label{ass:inv}
  Assume that the cardinality of $A_t$ is $p$. For each $h_t$, assume
  the support of $A_t$ conditional on $H_t=h_t$ may be partitioned
  into $p$-dimensional subintervals
  ${\cal I}_{t,j}(h_t):j = 1, \ldots, J_t(h_t)$ such that
  $\d(a_t, h_t, \varepsilon_t)$ is equal to some
  $\d_j(a_t, h_t, \varepsilon_t)$ in ${\cal I}_{t,j}(h_t)$ and
  $\d_j(\cdot,h_t, \varepsilon_t)$ has inverse function
  $\b_j(\cdot, h_t, \varepsilon_t)$ with Jacobian
  $\b_j'(\cdot, h_t, \varepsilon_t)$ with respect to $a_t$.
\end{assumption}

In what follows, we assume that either (i) $A_t$ is a discrete random variable
for all $t$, or (ii) $A_t$ is a continuous random variable and the modified
treatment policy $\d$ satisfies Condition~\ref{ass:inv}. Next, for continuous
$A_t$ we define
{\footnotesize\begin{equation}\label{eq:gdelta}
    \g_{A,t}^\d(a_t \mid h_t) =
    \sum_{j=1}^{J_t(h_t)} \int \one_{t, j} \{\b_j(a_t, h_t, \varepsilon_t),
    h_t\}\times \g_{A,t}\{\b_j(a_t, h_t, \varepsilon_t)\mid h_t\}
    \times\big|\det\{\b_j'(a_t,h_t, \varepsilon_t)\}\big|\dd\P(\varepsilon_t),
  \end{equation}}%
where $\one_{t,j} \{u, h_t\} = 1$ if $u \in {\cal I}_{t, j}(h_t)$ and
$\one_{t,j} \{u, h_t\} = 0$ otherwise. Under Condition~\ref{ass:inv},
application of the formula for the density of a transformation shows that the
p.d.f.~of $A_t^\d$ conditional on the history $h_t$ is $\g_{A,t}^\d(a_t \mid
h_t)$. In the case of Example~\ref{ex:mshift}, the post-intervention
p.d.f.~becomes
\begin{equation*}
  \g_{A,t}^\d(a_t\mid h_t) = \g_{A,t}(a_t\mid h_t) \one\{a_t \delta <
  l_t(h_t)\} + \delta^{-1}\g_{A,t}(a_t\delta^{-1}\mid h_t)
  \one\{a_t\geq l_t(h_t)\},
\end{equation*}
which shows that piecewise smoothness is sufficient to handle interventions such
as~\eqref{eq:mshift} which are not smooth in the full range of the treatment.
Condition~\ref{ass:inv} and expression~\eqref{eq:gdelta} ensure that we can use
the change of variable formula when computing integrals over $\mathcal A_t$ for
continuous treatments. This is useful for studying properties of the parameter
and estimators we propose.

For discrete treatments, we let
\begin{equation}
  \g_{A,t}^\d(a_t\mid h_t) = \sum_{s_t\in \mathcal A_t}\int
  \one\{\d(s_t, h_t,\varepsilon_t)=a_t\}\g_{A,t}(s_t\mid
  h_t)\dd\P(\varepsilon_t).\label{eq:disc}
\end{equation}
Evaluated using the definition of the incremental propensity score intervention
based on the risk ratio given in Example~\ref{ex:ipsi}, this expression yields
\[
  \g_{A,t}^\d(a_t\mid h_t) = a_t\delta \g_{A,t}(1\mid h_t) + (1-a_t)[1-\delta
    \g_{A,t}(1\mid h_t)],
\]
which clarifies that the risk ratio for receiving the treatment under the
intervention versus the observed data is equal to $\delta$. In what follows, it
will be useful to define the parameter
\[
  \w_t(c_t, a_t, h_t) = \frac{\g_{A,t}^\d(a_t\mid
    h_t)}{\g_{A,t}(a_t\mid h_t)}\frac{c_t\times r_t}{\g_{C,t}(a_t, h_t)},
\]
and the functions
\begin{equation}
  \label{eq:eifgamma}
  \corr{\D_t : x\mapsto
  \sum_{s=t}^\tau\left(\prod_{k=t}^s \w_k(c_k, a_k, h_k)\right)
  \{r_{s+1}\q_{s+1}(a_{s+1}^\d, h_{s+1}) +z_{s+1}-
  \q_s(a_s, h_s)\} + \q_t(a_t^\d, h_t)}
\end{equation}
for $t=\tau,\ldots,1$ which map $x=(d_t, y_t, l_t, a_t,
c_t:t\in\{1,\ldots,\tau\})$ to the real line. When necessary, we use the
notation $\D_t(x;\eta)$ or $\D_t(x;\underline\eta_t)$ to highlight the
dependence of $\D_t$ on $\underline\eta_t=(\w_t,\q_t,\ldots,\w_\tau, \q_\tau)$
. We also use $\eta$ to denote $(\w_1,\q_1,\ldots,\w_\tau, \q_\tau)$, and define
$\D_{\tau+1}(X;\eta) = Y$.

Estimation of $\theta$ will proceed by regression of the transformation
$\D_{t+1}$ sequentially from $t=\tau$ to $0=1$ to obtain estimates $\q_t$. We
say that $\D_{t+1}$ is a sequentially doubly robust unbiased transformation for $\q_t$ due
to the following proposition:

\begin{proposition}[Sequentially doubly robust transformation]\label{prop:vonm}
  Let $\eta'$ be such that either $\q'_s=\q_s$ or $\w'_s=\w_s$ for all
  $s>t$. Then, in the event $R_t=1$, we have
  \[\corr{\E\big[R_{t+1}\times \D_{t+1}(X;\eta') + Z_{t+1}\mid
    C_t=R_t=1,A_t=a_t,H_t=h_t\big] =\q_t(a_t,h_t).}\]
\end{proposition}
This proposition is a straightforward application of the von-Mises expansion
given in Lemma~\ref{lemma:vm} in the supplementary materials. Another
consequence of Lemma~\ref{lemma:vm} is that the efficient influence
function for estimating $\theta = \E[\q_1(A^\d,L_1)]$ in the non-parametric
model is given by $\D_1(X)-\theta$~\citep[see][]{diaz2021lmtp}.
Proposition~\ref{prop:vonm} motivates the construction of the sequential
regression estimator by iteratively regressing an estimate of the data
transformation $\D_{t+1}(X;\eta)$ on $(A_t,H_t)$ among individuals with
$C_{t-1}=R_t=1$, starting at $\D_{\tau+1}(X;\eta)=Y$.
\begin{remark}[Double robustness of incremental propensity
  score interventions]\label{remark:ipsi} Consider the odds ratio
  incremental propensity score interventions of Example~\ref{ex:ipsi},
  where we let $\varepsilon_t\sim U(0,1)$, and define $\d(h_t,
  \varepsilon_t)=\one\{\varepsilon_t< \g_t^\d(1\mid h_t)\}$. The
  result in Proposition~\ref{prop:vonm} also holds for this
  intervention, but it does not allow for doubly robust estimation. To
  see why, notice that $\q_t$ itelse depends on $\{\g_s:s>t\}$ (or
  equivalently, on $\{\w_s:s>t\}$) through $\d$. Therefore, if $\q'_s=
  \q_s$ and $\w'_s\neq \w_s$ for all $s>t$, we get
  \corr{$\E\big[R_{t+1}\times \D_{t+1}(X;\eta') + Z_{t+1}\mid
    C_t=R_t=1,A_t=a_t,H_t=h_t\big] =\q_t(a_t,h_t)$}, where $\q_t$ is
  evaluated at the misspecified $\{\w_s':s>t\}$. This is aligned
  with the findings of \cite{kennedy2019nonparametric}. In contrast,
  the risk ratio incremental propensity score intervention does allow
  for doubly robust estimation, since $\d_t$ does not depend on
  $\w_t$.
\end{remark}
In order to avoid imposing entropy conditions on the initial
estimators, we use sample splitting and
cross-fitting~\citep{klaassen1987consistent, zheng2011cross,
  chernozhukov2018double}. Let ${\cal V}_1, \ldots, {\cal V}_J$ denote
a random partition of the index set $\{1, \ldots, n\}$ into $J$
prediction sets of approximately the same size. That is,
${\cal V}_j\subset \{1, \ldots, n\}$;
$\bigcup_{j=1}^J {\cal V}_j = \{1, \ldots, n\}$; and
${\cal V}_j\cap {\cal V}_{j'} = \emptyset$. In addition, for each $j$,
the associated training sample is given by
${\cal T}_j = \{1, \ldots, n\} \setminus {\cal V}_j$. We let
$\hat \eta_{j}$ denote the estimator of $\eta$ obtained by training
the corresponding prediction algorithm using only data in the sample
${\cal T}_j$. Further, we let $j(i)$ denote the index of the
validation set containing unit $i$. For preliminary estimates
$\hat\w_{1,j(i)},\ldots,\hat \w_{\tau, j(i)}$, the estimator is
defined as follows:
\begin{enumerate}[label=Step \arabic*:, align=left, leftmargin=*]
\item Initialize  $\D_{\tau+1}(X_i;\underline{\check\eta}_{\tau,j(i)})=
  Y_i$ for $i=1,\ldots,n$.
\item For $t=\tau,\ldots,1$:
  \begin{itemize}
  \item Compute the pseudo-outcome
    \corr{$\check Y_{t+1,i} = R_{t+1}\times
    \D_{t+1}(X_i;\underline{\check\eta}_{t,j(i)})+Z_{t+1}$} for all
    $i=1,\ldots,n$.
  \item For $j=1,\ldots,J$:
    \begin{itemize}
    \item Regress $\check Y_{t+1,i}$ on $(A_{t,i,}H_{t,i})$ using any
      regression technique and using only data points
      $i\in \mathcal T_{j}$.\label{step:2}
    \item Let $\check \q_{t,j}$ denote the output of the above regression, update
      $\underline{\check\eta}_{t, j} = (\hat\w_{t,j}, \check
      \q_{t,j},\ldots,\hat\w_{\tau,j}, \check \q_{\tau,j})$, and
      iterate.
    \end{itemize}
  \end{itemize}
\item Define the sequential regression estimator as
  \[\hat\theta=\frac{1}{n}\sum_{i=1}^n\D_1(X_i;\check\eta_{j(i)}).\]
\end{enumerate}
Implementation of the above estimator requires an algorithm for
estimation of the parameters $\w_t$ for all $t$. This parameter may be
estimated component-wise as follows. First, the censoring mechanism
$\g_{C,t}$ may be estimated through any regression or classification
procedure by regressing the indicator $C_t$ on data $(A_t,H_t)$ among
individuals with $R_t=C_t=1$. Thus, it only remains to estimate the
density ratio $\g_t^\d(a_t\mid h_t)/\g_t(a_t\mid h_t)$. An option to
estimate this density ratio is to first obtain an estimate of the
density $\g_t(a_t\mid h_t)$, and then to obtain an estimator of
$\g_t^\d(a_t\mid h_t)$ by applying formula~\eqref{eq:gdelta}
or~\eqref{eq:disc}. This approach is problematic in the case of
continuous or multivariate treatments due both to the curse of
dimensionality and to the fact that flexible and computationally
efficient estimation of conditional densities is an under-developed
area of machine learning (relative to regression), though some suitable
techniques do exist~\citep[e.g.,][]{diaz2011super, hejazi2021haldensify,
hejazi2022efficient}. This approach can also be problematic for discrete
treatments of high cardinality. Since the conditional densities themselves are
unnecessary for estimation, and direct estimation of the ratio $\g_t^\d(a_t\mid
h_t)/\g_t(a_t\mid h_t)$ suffices, we propose to use~\citet{diaz2021lmtp}'s
approach to direct estimation of density ratios. Briefly, this approach starts
by constructing an augmented artificial dataset of size $2n$ in which each
observation is duplicated. For each observation, one duplicate gets assigned the
post-intervention treatment value $\d(A_t, H_t)$ and the other gets the actual
observed value $A_t$. Then, the task of density ratio estimation reduces to the
task of classifying which record belongs to the intervened-upon unit, based on
the treatment and the history $H_t$. Specifically,~\citet{diaz2021lmtp} show
that the odds of this classification problem is equal to the density ratio.
Importantly, data-splitting procedures such as cross-validation or cross-fitting
must be performed such that the two records from the same participants are
always kept in the same fold.

The above estimation algorithm requires the implementation of several regression
procedures. Many algorithms from the machine learning literature are available
for such regression problems, including those based on ensembles of regression
trees, splines, etc. As stated in the asymptotic normality result below, it is
important that the regression procedures used approximate as well as possible
the relevant components of the true data-generating mechanism. Increasing the
predictive performance of these regression fits can entail estimator selection
from a large list of candidate estimators. If the sample size is large enough,
a principled solution to this problem is the use of Super
Learning~\citep{vanderLaanDudoitvanderVaart06, vanderLaanPolleyHubbard07}, a
form of model stacking~\citep{Wolpert1992,Breiman1996} that constructs an
ensemble model as a convex combination of candidate algorithms from
a user-supplied regression library, where the weights in the combination are
computed based on cross-validation and are such that they minimize the
holdout prediction error of the resulting estimator.

The asymptotic distribution of $\theta$ is Gaussian with variance equal to the
efficiency bound, under conditions. To state the conditions, we first define the
data-dependent parameter
\[
  \corr{\check\q_t^\dagger(a_t, h_t) = \E\big[R_{t+1}\D_{t+1}
  (X;\underline{\check\eta}_t) +Z_{t+1}\mid C_t=R_t=1, A_t=a_t,H_t=h_t\big]},
\]
where the outer expectation is only with respect to the distribution $\P$ of $X$
(i.e., $\check\eta$ is fixed). The following theorem is a consequence of
Theorem 4 of~\citet{diaz2021lmtp}:
\begin{theorem}[Consistency and weak convergence of SDR
  estimator]\label{theo:asrem}
  We have
  \begin{enumerate}[label=(\roman*)]
  \item If, for each time $t$, either $||\hat\w_t - \w_t|| = o_\P(1)$
    or $||\check\q_t - \check\q^\dagger_t|| = o_\P(1)$, then
    $\hat\theta =\theta + o_\P(1)$.
  \item If
    $\sum_{t=1}^\tau||\hat\w_t - \w_t||\, ||\check\q_t -
    \check\q_t^\dagger|| = o_\P(n^{-1/2})$ and
    $\P\{\w_t(A_t,H_t) < c\}=\P\{\hat\w_t(A_t,H_t) < c\}=1$ for some
    $c<\infty$, then
    \[\sqrt{n}(\hat\theta - \theta) =
      \frac{1}{\sqrt{n}}\sum_{i=1}^n\D_1(X_i;\eta) + o_P(1),\]
    implying that
    $\sqrt{n}(\hat\theta - \theta) \rightsquigarrow N(0,\sigma^2)$,
    where $\sigma^2=\var\{\D_1(X;\eta)\}$ is the non-parametric
    efficiency bound.
  \end{enumerate}
\end{theorem}
Note that this means that $\hat\theta$ is \textit{sequentially
doubly robust} in the sense that it is consistent if at least one of
the two nuisance parameters is consistently estimated at each time
point. This property is also known in the literature as
\textit{$2^\tau$-multiple robustness}. We prefer the term \textit{sequentially
  doubly robust} for two reasons. First, contrasted to the double
robustness property which requires one out of two nuisance estimators
to be consistent, the term $2^\tau$-multiple robustness erroneously
conveys the message that consistency would follow from one out of
$2^\tau$ nuisance estimators being consistent. Second, retaining the
term ``double'' in this description conveys a fundamental property of
the estimation procedure: it is based on an expansion (given in
Lemma~\ref{lemma:vm} in the supplementary material) with
second-order remainder terms. In this nomenclature, a triply robust
estimating procedure would be one based on an expansion with
third-order remainder terms, and so forth.\footnote{Some of these points were
brought to our attention by Edward H.~Kennedy in a conversation on Twitter
(see this thread,
\url{https://twitter.com/edwardhkennedy/status/1446952129218949128?s=20&t=8o_-20szw0PPfqocYiw7Dg}).}

A drawback of the above estimation approach is that there is no
guarantee that the parameter will remain within the bounds of the
parameter space. That is, it is possible, in principle, that the
sequential regression estimator will provide time-to-event
probabilities outside of the closed unit interval $[0,1]$. In the
supplementary materials, we present a targeted minimum loss estimator
(TMLE) capable of producing estimates that always lie within the unit
interval.  However, boundedness of the TMLE comes with a robustness
tradeoff. In particular, the robustness properties of the TMLE are
inferior to those of the sequentially doubly robust
estimator. Specifically, consistency of the TMLE requires that either
$\lVert \hat\w_t - \w_t \rVert = o_\P(1)$ or
$\lVert \hat\q_t - \q_t \rVert = o_\P(1)$, for an estimator
$\hat\q_t$ constructed based on sequential regression
using~\eqref{eq:seqreg}. As a result, consistent estimation of $\q_t$
in the TMLE procedure requires consistent estimation of $\q_s:
s>t$. By contrast, the sequentially doubly robust estimator provides
consistent estimation of $\q_t$ when either $\q_s$ or $\w_s$ is
consistently estimated for all
$s>t$. Both~\citet{luedtke2017sequential} and \citet{diaz2021lmtp}
provide further in-depth discussion on this topic. Furthermore, the
conditions required for $\sqrt{n}$-consistency for this TMLE algorithm
may be stronger than those required for the SDR estimator. In the supplementary
materials, we provide an argument that the second-order term associated to the
TMLE is bounded above by the term
\[
  \sum_{t=1}^\tau\left(\lVert \hat\w_t - \w_t\rVert\sum_{s=t}^{\tau} \lVert
  \hat\q_s - \q_s^\star \rVert\right),
\]
where $\q_t^\star$ is the data-dependent parameter defined as
\[
  \corr{\q_t^\star(a_t, h_t) = \E\big[R_{t+1}\hat \q_{t+1}(a_{t+1},
    h_{t+1}) + Z_{t+1}\mid
  C_t=R_t=1, A_t=a_t,H_t=h_t\big]}.
\]
In comparison to the assumption in statement (ii) of the above
theorem, convergence of the second-order term associated to the TMLE
seems to require stronger assumptions that would also require
convergence of all the cross-product terms $\lVert \hat\w_t - \w_t
\rVert\lVert \hat\q_s - \q_s^\star \rVert$ for $s\geq t$.

Neither the SDR nor the TMLE guarantee that a survival function
estimated at several points in time will result in monotonic
decreasing estimates. In order to leverage the desirable robustness
properties of the SDR estimator while endowing it with the properties
of boundedness and monotonicity, we leverage a set of techniques
proposed by~\citet{westling2020correcting}. Under their approach,
out-of-bounds estimates are truncated to remain within bounds, and the
possibly non-monotonic estimate of the survival curve, alongside any
confidence region limits, is projected onto the space of monotonic
functions by way of isotonic regression. Importantly,
\citet{westling2020correcting} show that the resulting projected
estimator retains the same asymptotic properties as the original
estimator. In our case, this implies that the projected estimator
retains the properties of asymptotic normality and consistency under
the assumptions of Theorem~\ref{theo:asrem}. Specifically,
simultaneous confidence bands centered around the isotonic projection
constructed using the multivariate distribution implied by
Theorem~\ref{theo:asrem} will have correct asymptotic
coverage under the assumptions of the theorem.

Under the conditions of the above theorem, a Wald-type procedure, in which the
standard error may be estimated based on the empirical variance of
$\D_1(X_i;\check\eta_{j(i)})$, yields asymptotically valid confidence intervals
and hypothesis tests. Furthermore, the asymptotic linearity of
Theorem~\ref{theo:asrem} implies that estimates of survival functions for
multiple time points asymptotically converge to a multivariate normal
distribution, which can be used to construct simultaneous confidence intervals
using the method described, e.g., by~\cite{hothorn2008simultaneous}.

\section{Illustrative application}

\subsection{Background on motivating study and data}

To illustrate the proposed methods, we aim to answer the following question:
\textit{what is the effect of invasive mechanical ventilation (IMV) on
acute kidney injury (AKI) among COVID-19 patients?}  This question has been
recently identified by an expert panel on lung--kidney interactions in critically
ill patients~\citep{joannidis2020lung} as an area in need of further research.
AKI is a common condition in the ICU, complicating about 30\% of ICU admissions,
and causing increased risk of in-hospital mortality and long-term morbidity and
mortality~\citep{kes2010acute}.
AKI is often viewed as a tolerable consequence of interventions to support other
failing organs, such as IMV~\citep{husain2016lung}.

To answer this question, we will use data on approximately 3,300 patients
hospitalized with COVID-19 at the New York Presbyterian Cornell, Queens, or
Lower Manhattan hospitals. Our analysis focuses on patients hospitalized between
March 3\textsuperscript{rd} and May 15\textsuperscript{th}, 2020, who did not
have a previous history of Chronic Kidney Disease (CKD). This choice of
timeframe is rooted in there being a lack of treatment guidelines in the initial
months of the COVID-19 pandemic, which resulted in a large variability in
provider practice regarding mechanical ventilation. This clinical heterogeneity
in the deployment of IMV makes the problem particularly well-suited to
assessment via the causal effects of LMTPs. The analytical dataset was built
from two distinct databases. Demographics, comorbidity, intubation, death, and
discharge data were abstracted from electronic health records by trained medical
professionals into a secure RedCAP database~\citep{goyal2020clinical}. These
data were supplemented with the Weill Cornell Critical carE Database for
Advanced Research (WC-CEDAR), a comprehensive data repository housing
laboratory, procedure, diagnosis, medication, microbiology, and flowsheet data
documented as part of standard care~\citep{schenck2021critical}. This dataset is
unique in that it contains a large sample of patients treated in the same
hospital system within a short period of time, while still exhibiting extensive
treatment variability.

The most frequent reason for morbidity and mortality among patients with
COVID-19 is pneumonia leading to acute respiratory distress syndrome
(ARDS). Respiratory support via devices such as nasal cannulae, face
masks, or endotracheal tubes (i.e., intubation and subsequent
mechanical ventilation) is a primary treatment for
ARDS~\citep{hasan2020mortality}. For severe ARDS patients, an
important decision must be made regarding the optimal timing of
intubation based on clinical factors such as oxygen saturation,
dyspnea, respiratory rate, chest radiograph, and, occasionally,
external resource considerations such as ventilator or provider
availability~\citep{tobinresp, thomson2021timing}. At the outset of
the COVID-19 pandemic, multiple international guidelines recommended
early intubation in an effort to protect healthcare workers from
infection and to avoid complications stemming from ``crash''
intubations~\citep{papoutsi2021effect}. Yet, as the pandemic
progressed and multiple reports documented high mortality for
mechanically ventilated patients, guidance changed to delaying
intubation with the rationale that IMV increases risk of
secondary infections, ventilator-induced lung injury, damage to other
organs, and death~\citep{bavishi2021timing, tobin2006principles}. The
lungs and kidneys are physiologically closely tied, and it has been
hypothesized that mechanical ventilation may cause AKI in COVID-19
patients via oxygen toxicity and capillary endothelial damage leading
to inflammation, hypotension, and
sepsis~\citep{durdevic2020progressive}.

Despite these hypotheses, a causal link between intubation and AKI in patients
with COVID-19 has not been definitively established. Furthermore, such a link is
difficult to study in a randomized fashion due to the clinically subjective
nature of intubation assignment --- that is, it is impossible to pick an oxygen
saturation breakpoint to safely intubate across all patients and hospital
settings~\citep{tobin2020caution, perkins2020recovery}.

While the original dataset cannot be made publicly available due to patient
privacy considerations, we have created a synthetic dataset that mimics the
general structure of the real clinical dataset, and we have made this synthetic
dataset available publicly. The \texttt{R} code and the synthetic dataset are
available in the GitHub repository at
\url{https://github.com/kathoffman/lida-comprisks}.

\subsection{Time-varying variables and modified treatment policy}

For each patient, study time begins on the day of hospitalization. The
treatment of interest is daily maximum respiratory support, which can
take three levels: no supplemental oxygen, supplemental oxygen not
including IMV, and IMV. Our goal is to estimate the overall causal
effect on daily AKI rates of an intervention that would delay
intubation for IMV by a single day among 449 patients who received IMV
for more than one day, and a prevention of IMV for patients who
received IMV for exactly one day. To simplify the presentation, in
what follows we refer to this intervention as a ``one-day delay in
intubation.'' Death by other causes is treated as a competing risk for
AKI development. In notation, this LMTP may be expressed as
follows: Let $A_t\in\{0,1,2\}$, where $0$ indicates no supplemental
oxygen at end of day $t$, $1$ indicates supplemental oxygen not
including IMV, and $2$ indicates IMV. Consider the following
intervention:
\begin{equation}\label{eq:exmtp}
  \d_t(a_t,h_t) =
  \begin{cases}
    1 &\text{ if } a_t=2 \text{ and } a_s \leq 1 \text{ for all } s < t,\\
    a_t & \text{ otherwise.}
  \end{cases}
\end{equation}
Under this intervention, a patient who is first intubated at day $t$
would receive non-invasive oxygen support at day $t$. Otherwise, the
patient would receive the oxygen support actually (i.e.,
``naturally'') observed at day $t$. This intervention assesses what
would have happened in a hypothetical world where intubation
procedures were more conservative (operationalized as a one-day delay)
than they were at the beginning of the pandemic in New York City's
surge conditions, when limited information and treatments were
available for COVID-19.  Figure~\ref{fig:os} shows the observed
supplemental oxygen levels and observed AKI and death rates across
days 1-14.

Baseline confounders include age, sex, race, ethnicity, body mass
index (BMI), hospital location, home oxygen status, and comorbidities
(e.g., hypertension, history of stroke, Diabetes Mellitus, Coronary
Artery Disease, active cancer, cirrhosis, asthma, Chronic Obstructive
Pulmonary Disease, Interstitial Lung Disease, HIV infection, and
immunosuppression), and are included in $L_1$.  Time-dependent
confounders include vital signs (e.g., highest and lowest respiratory
rate, oxygen saturation, temperature, heart rate, and blood pressure),
laboratory results (e.g., C-Reactive Protein, BUN Creatinine Ratio,
Activated Partial Thromboplastin time, Creatinine, lymphocytes,
neutrophils, bilirubin, platelets, D-dimer, glucose, arterial partial
pressure of oxygen, and arterial partial pressure of carbon dioxide),
and an indicator for daily corticosteroid administration greater than
0.5 mg/kg body weight. Time-dependent confounders were measured
irregularly for each patient. We bin them in 24-hour intervals from
the time of hospitalization. In cases of multiple measurements in a
24-hour window, the clinically worst measure was used. In cases of
missing baseline confounders, multiple imputation using chained
equations (MICE)~\citep{mice2011jss} is used. For missing data at
later time points, the last observation is carried forward. Patients
are censored at their day of hospital discharge, as AKI and vital
stataus were unknown after this point. A distribution of the censoring
weights is given in Figure~\ref{fig:cens} in the supplementary
materials.

A possible limitation of our setup is that we treat discharge as a
censoring event. If patients who remain in-hospital at the end of
follow-up are very different from those who are discharged, and if
such differences are not explained by measured covariates, then our
results may be biased. Specifically, our analyses rely on
Assumption~\ref{ass:exch2}, which applied to this problem states that
$(A_t, H_t)$ contains all the common causes of discharge at time $t$
and future variables measured at times $s>t$, including future
outcomes. This assumption could be violated if there is an unmeasured
common cause of discharge and outcomes such as an underlying
biological or clinical process not captured by the covariates listed
in the above paragraph. 

\begin{figure}[!htp]
  \centering
  \includegraphics[scale = 0.5]{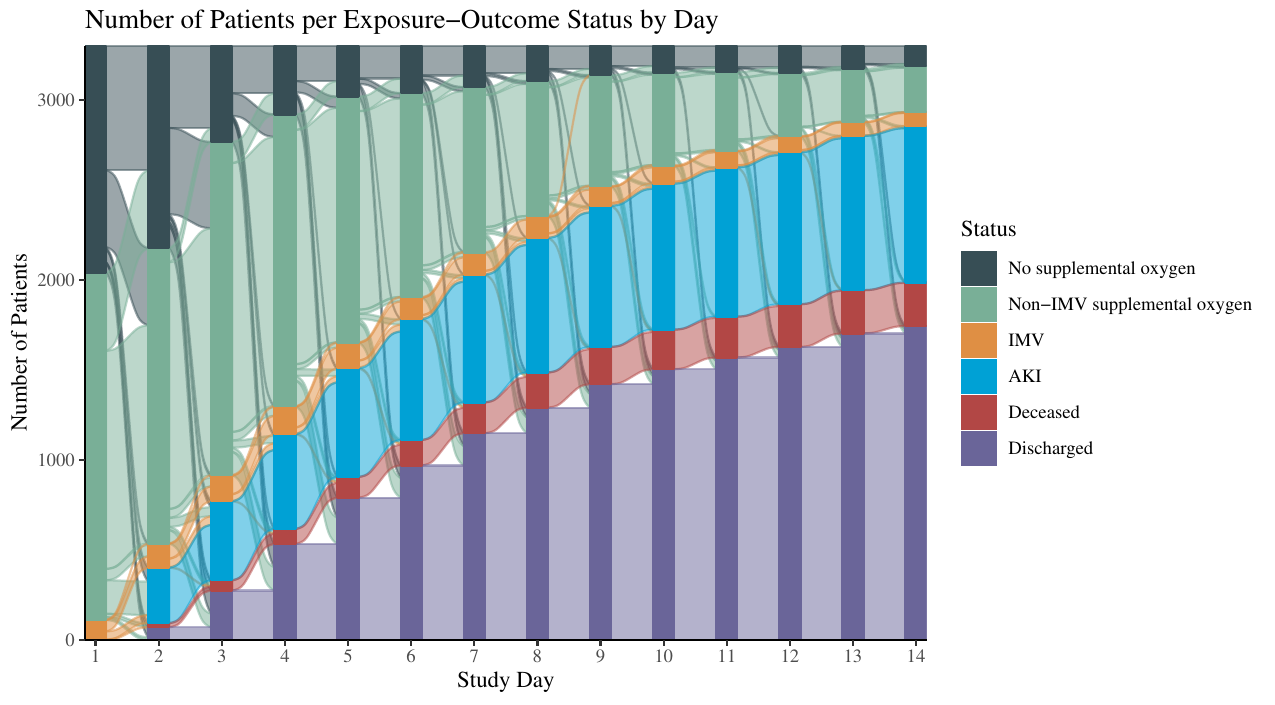}
  \caption{A flow diagram, or ``alluvial plot", showing the movement
    of patients between exposure and outcome statuses. The $x$ axis
    shows study day and the $y$ axis shows counts of patients. Colors
    indicate exposure status for patients still in the study, or
    outcome status for patients who have already experienced the
    outcome, competing risk, or loss-to-follow-up. Darker-colored bars
    indicate the number of patients in each exposure or outcome at
    that time point, while lighter-colored bars visualize patients
    moving between time points. The exposure contains three levels: no
    supplemental oxygen, supplemental oxygen not including IMV, and
    IMV. The outcome of interest is AKI, and death is a competing
    risk. Patients who are discharged are right-censored. There is a
    one-day lag (i.e.  no exposure status shown for the last study
    day) so that all data used in the analysis can be visualized on a
    single plot.}
  \label{fig:os}
\end{figure}

\subsection{Results}

We estimate the effect of the LMTP with the SDR estimator (on account of its
enhanced robustness relative to TMLE), using a version of the \texttt{lmtp}
\texttt{R} package~\citep{williams2022lmtp} modified to support the competing
risks setting. The nuisance functions $\w_t$ and $\q_t$ are estimated using the
Super Learner ensemble modeling algorithm~\citep{vanderLaanPolleyHubbard07} via
the \texttt{sl3} \texttt{R} package~\citep{coyle2022sl3}. The candidate library
considered by the Super Learner included a total of 32 learning algorithms,
inclusive of hyperparameter variations. The algorithms included multivariate
adaptive regression splines~\citep[MARS;][]{friedman1991multivariate};
random forests~\citep{breiman2001random, wright2017ranger}; extreme gradient
boosted trees~\citep{friedman2001greedy, chen2016xgboost}; lasso, ridge, and
elastic net (with equal weights, i.e., $\alpha = 0.5$) regularized
regression~\citep{friedman2009glmnet}; main terms generalized linear
models~\citep{mccullagh1989generalized, enea2009fitting}; generalized linear
models with Bayesian priors on parameter estimates~\citep{gelman2006data}; and
an unadjusted outcome mean.  The estimators of $\w_t$ and $\q_t$ were computed
under a Markov assumption with lag of two days to reflect the collection of laboratory
results at minimum 48-hour intervals.


Note that a complete clinical understanding of the effect of IMV would
require estimating its effect on death, as these interventions are
meant to save lives. Figure~\ref{fig:sdr_survest} displays the
SDR-based incidence difference between the two treatment conditions
for both outcomes, together with 95\% simultaneous confidence bands.

\begin{figure}[!htb]
  \centering
  \includegraphics[scale=0.5]{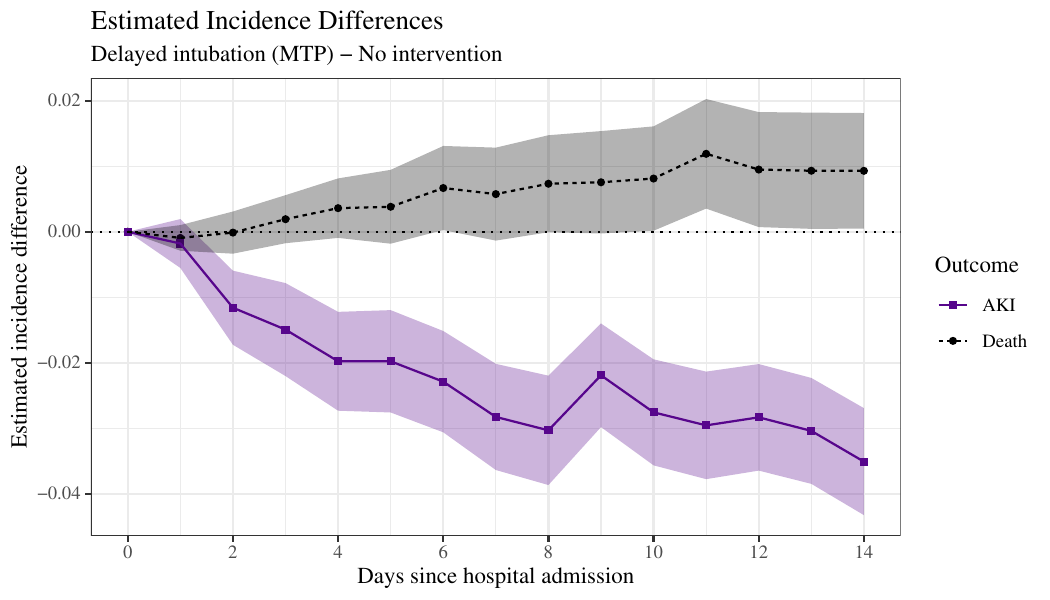}
  \caption{Difference between the two treatment policies for two
    outcomes: AKI and mortality. 95\% simultaneous confidence bands
    cover the sets of point estimates.}
  \label{fig:sdr_survest}
\end{figure}
Inspection of Figure~\ref{fig:sdr_survest} makes apparent the
protective effect of the delay-in-intubation policy against AKI. The
estimates of the incidence difference are consistently negative and
decrease steadily in time; moreover, the simultaneous confidence bands
do not overlap with the null value of zero except for on the first
day. Hypothesis testing of the incidence difference yields adjusted
p-values $< 0.001$ for all days after the first, indicating a
statistically significant protective effect of the LMTP; the
Bonferroni correction was used to adjust for testing
multiplicity. Based on these estimates, the delay-in-intubation policy
would be expected to result in a decrease in 14-day AKI rates by
approximately 3.5\% (simultaneous 95\% CI: [2.7\%, 4.3\%]) relative to
no intervention. Importantly, while IMV does indeed have the effect of
increasing survival, its negative effect on AKI is much larger than
its positive effect on survival. This raises the question of whether
there are patients for whom there is no survival benefit who are being
harmed by the intervention. Developing methods for identifying those
patients is the subject of future methodological research.



\section{Discussion}

Modified treatment policies represent a recent but powerful advance in
modern causal inference. These intervention regimes have a variety of
applications, including (i) flexibility in the definition of effects
whose identification relies on a positivity assumption that may be
satisfied by design; (ii) flexibility in the definition and estimation
of the effects of continuous, time-valued, and multivariate
treatments; and (iii) definition of effects that depend on the natural
value of treatment. Here, we have advanced the LMTP methodology,
extending it to the case of survival outcomes with competing
events. We have presented identification assumptions for the causal
parameters, as well as a sequentially doubly robust and efficient
estimation algorithm capable of leveraging the many data-adaptive
regression procedures available for nuisance parameter estimation. We
ensure that the estimates of the survival function remain within the
parameter space (monotonic decreasing and bounded in the closed unit
interval) by leveraging the monotonic projection framework
of~\citet{westling2020unified} and~\citet{westling2020correcting}. We
have illustrated the scientific utility of our approach through
application to the study of negative side effects of invasive
mechanical ventilation on acute kidney injury in patients hospitalized
with COVID-19, for which death by other causes serves as a competing
event.

Our Proposition~\ref{prop:vonm} shows that doubly robust estimation is possible
for any intervention that is defined in terms of an LMTP satisfying the stated
conditions. Namely, any randomized intervention defined by a known function
$\d(a_t,h_t,\varepsilon_t)$ through either an LMTP or a general stochastic
intervention accommodates doubly robust estimation, provided that: (1) either
(i) $A_t$ is a discrete random variable for all $t$, or (ii) $A_t$ is
a continuous random variable and the modified treatment policy $\d$ satisfies
Condition~\ref{ass:inv}, and (2) the randomizer $\varepsilon_t$ is drawn
independently across units and independently of $U$, and its distribution does
not depend on $\P$. Notably, this rich class contains all interventions
considered by Theorem 2 of~\citet{wen2021intervention}.

A limitation of the approach taken in this paper is that we considered
data collected in discrete time. While the assumption of discrete time
is sensible in many applications (e.g., most clinical measures are
recorded in discrete time, whether the scale is seconds, minutes,
hours, or days), our methods may be of limited applicability when the
time scale is truly continuous or very fine (e.g., seconds) such that
many time points need to be considered. In such cases, it may be
possible to coarsen the time scale into discrete components. Whether such
an approach provides sensible answers and the appropriate level of
coarsening needs to be assessed on a case-by-case basis based on
subject-matter relevance for the application at hand.


\section*{Conflict of Interest}

The authors declare that they have no conflict of interest.

\appendix

\section{Auxiliary results}
The proofs of results in the paper rely heavily on existing
identification and optimality results for general longitudinal
modified treatment policies~\citep{diaz2021lmtp}. We first
present those results. Consider a data structure
$Z=(W_1, E_1, \ldots,W_{\tau}, E_{\tau}, Y)$. We assume these data are
generated by a non-parametric structural equation model with exogenous
errors $U_{W_t}$ and $U_{E_t}$ as detailed by
\cite{diaz2021lmtp}. Let $\m_{C, \tau+1}=Y$,
$V_t = (\bar W_t, \bar E_{t-1})$, and let $\h(e_t, v_t)$ denote a
modified treatment policy on $E$, with
$E_t^\h=\h(E_t(\bar E_{t-1}^\h),V_t(\bar E_{t-1}^\h))$ denoting the
intervened exposure variable. For $t=\tau,\ldots,0$, let
\[\m_t(e_t, v_t)=\E[\m_{t+1}(E_{t+1}^\h, V_{t+1})\mid E_t= e_t,
  V_t=v_t].\] Furthermore, let $\g_t(e_t\mid v_t)$ denote the
probability density function of $E_t$ conditional on $V_t=v_t$,
$\g_t^\h(e_t\mid v_t)$ the probability density function of $E_t^\h$
conditional on $V_t=v_t$, and define
\[\r_t(e_t,v_t) = \frac{\g_t^\h(e_t\mid v_t)}{\g_t(e_t\mid v_t)}.\]
Define
\[\psi_t(z) = \sum_{s=t}^\tau \left(\prod_{k=t}^s\r_k(e_k, v_k)
  \right)\{\m_{s+1}(e_{s+1}^\h,v_{s+1}) - \m_s(e_s,v_s)\} +
  \m_t(e_t^\h,v_t).\]
Then \cite{diaz2021lmtp} proved the following three
results.
\begin{theorem}\label{theo:gcompid}
  Let $Y(\bar E^\h)$ denote a counterfactual outcome in a hypothetical
  world where $\bar E = \bar E^\h$. Assume
  $U_{E,t}\indep \ubar U_{W,t+1}\mid V_t$ and that
  $(e_t,v_t)\in\supp\{E_t,V_t\}$ implies $(\h(e_t,v_t),v_t)\in\supp\{E_t,V_t\}$. Then
  $\lambda=\E[\m_1(E_1^\h,V_1)]$ identifies $\E\{Y(\bar E^\h)\}$.
\end{theorem}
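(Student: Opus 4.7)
The plan is to follow the standard g-formula argument for longitudinal stochastic interventions, proved by \cite{diaz2021lmtp}, using backward induction on $t$ to show that the sequential regression $\m_t$ built from the observed distribution coincides, when evaluated at the intervened exposures, with an iterated counterfactual expectation. Taking a final unconditional expectation then yields the claimed equality $\lambda = \E\{Y(\bar E^\h)\}$.

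First I would introduce the counterfactual regression
$$\tilde \m_t(\bar e_t^\h, \bar w_t) = \E\bigl[Y(\bar E^\h) \bigm| \bar E_t^\h = \bar e_t^\h,\; \bar W_t(\bar E_{t-1}^\h) = \bar w_t\bigr],$$
and state the inductive claim that $\tilde \m_t(\bar e_t^\h, \bar w_t) = \m_t(e_t^\h, v_t)$ with $v_t = (\bar w_t, \bar e_{t-1}^\h)$. The base case is immediate from the convention $\m_{\tau+1} = Y$. For the inductive step I would decompose the counterfactual expectation by first integrating out $W_{t+1}(\bar E_t^\h)$ and then $E_{t+1}^\h = \h(E_{t+1}(\bar E_t^\h), V_{t+1}(\bar E_t^\h))$. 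The assumption $U_{E,t} \indep \ubar U_{W,t+1} \mid V_t$ is the key lever here: it permits equating the counterfactual distribution of $W_{t+1}(\bar E_t^\h)$ given $(\bar E_t^\h, \bar W_t(\bar E_{t-1}^\h))$ with the observed conditional distribution of $W_{t+1}$ given $(E_t, V_t) = (e_t^\h, v_t)$. The positivity condition on the image of $\h$ ensures that the relevant conditional laws are defined on the observed support.

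Substituting the inductive hypothesis at time $t+1$ inside these integrals reduces the expression to the observed-data recursion defining $\m_t$, completing the induction. Iterating from $t = \tau$ down to $t = 1$ and averaging over $(E_1^\h, V_1) = (E_1^\h, W_1)$ delivers $\E[\m_1(E_1^\h, V_1)] = \E\{Y(\bar E^\h)\}$. The main obstacle is the bookkeeping of the counterfactual histories $\bar W_t(\bar E_{t-1}^\h)$ and the chain of conditional independences needed to swap counterfactual for observed conditional distributions; once the induction hypothesis is stated at matched arguments, each step reduces to a routine change-of-measure calculation.
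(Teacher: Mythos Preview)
Your sketch is sound and is exactly the standard backward-induction g-formula argument for longitudinal modified treatment policies. Note, however, that the paper does not reprove Theorem~\ref{theo:gcompid}: it is listed as an auxiliary result established by \cite{diaz2021lmtp} and is then applied (in Section~S2) to the specific data structure $V_t=(D_t,Y_t,L_t)$, $E_t=(A_t,C_t)$. So your proposal supplies a proof where the paper simply cites one; the approach you outline is the same as that in the cited reference.
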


\begin{theorem}\label{theo:eifgcomp}
  The efficient influence function for estimating $\lambda$ in the
  non-parametric model is given by $\psi_1(Z)-\lambda$.
\end{theorem}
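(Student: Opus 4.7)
The plan is to derive the efficient influence function (EIF) via the pathwise-derivative characterization. Because the statistical model is nonparametric, its tangent space is the full Hilbert space of mean-zero square-integrable scores, and it suffices to exhibit a mean-zero function $\D(Z)$ with
\[
\partial_\epsilon\,\lambda(\P_\epsilon)\big|_{\epsilon=0} = \E\{\D(Z)\,s(Z)\}
\]
for every regular submodel $\P_\epsilon$ passing through $\P$ at $\epsilon=0$ with score $s$; uniqueness of the canonical gradient then identifies $\D$ as the EIF.

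First, I would rewrite $\lambda$ as a nested integral against the observed-data law. Iterating the recursion $\m_t(e_t,v_t) = \E[\m_{t+1}(E_{t+1}^\h,V_{t+1})\mid E_t=e_t,V_t=v_t]$ and substituting $\g_t^\h = \r_t\,\g_t$ for every policy density that appears converts $\lambda$ into an expectation under $\P$ weighted by $\prod_{k=1}^\tau \r_k(E_k,V_k)$. Differentiating along a one-dimensional submodel and applying the product rule then yields one contribution per factor in the time-ordered likelihood.

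Second, I would identify these contributions with inner products of $s(Z)$ against the summands of $\psi_1$. The factor for the conditional density of $W_t$, or of $Y$ at $t=\tau+1$, produces, via a conditioning and change-of-measure argument, the covariance of $s(Z)$ with $\bigl(\prod_{k<t}\r_k\bigr)\{\m_t(E_t,V_t)-\m_{t-1}(E_{t-1},V_{t-1})\}$, while the factors for the exposure densities $\g_t$ produce contributions that cancel the extra terms coming from differentiating $\r_t$ and leave only the telescoping increments $\m_{s+1}(E_{s+1}^\h,V_{s+1})-\m_s(E_s,V_s)$. Summing across time and adding the term arising from differentiating $\E\{\m_1(E_1^\h,V_1)\}$ with respect to the marginal of $V_1$ reassembles $\E\{(\psi_1(Z)-\lambda)\,s(Z)\}$.

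The main obstacle is the bookkeeping in the second step: a single submodel perturbs both $\g_t$ and the ratio $\r_t=\g_t^\h/\g_t$ through $\g_t^\h$, which itself depends on the law, and one must verify that the resulting terms telescope into the clean form of $\psi_1$ rather than leaving residual non-canceling pieces. Once this reduction is in hand, each summand of $\psi_1(Z)-\lambda$ has conditional mean zero given the appropriate history by construction of $\m_t$ and of the importance weights, so the total is mean-zero and lies in the tangent space, completing the identification of the EIF.
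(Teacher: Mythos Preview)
The paper does not actually prove this theorem: Theorem~\ref{theo:eifgcomp} is listed in the ``Auxiliary results'' section as one of three results that \citet{diaz2021lmtp} proved, and is simply quoted here without argument. There is therefore no in-paper proof to compare your proposal against.

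That said, your plan follows the standard pathwise-derivative route that the cited reference uses, so in spirit you are aligned with the original source. A few remarks on the sketch itself. Your first step asserts that iterating the recursion yields $\lambda=\E\bigl[\prod_{k=1}^\tau \r_k(E_k,V_k)\,Y\bigr]$; this IPW identity does hold here, but the cleaner derivation in the LMTP literature differentiates the sequential-regression (g-computation) form directly rather than passing through the IPW form, since the latter makes the dependence of $\g_t^\h$ on $\g_t$ harder to track. More importantly, the obstacle you flag---that perturbing $\g_t$ also perturbs $\g_t^\h$ because the policy depends on the natural value $E_t$---is genuine and is \emph{not} resolved by the telescoping argument you describe. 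In the LMTP setting one needs an additional structural assumption on $\h$ (piecewise smooth invertibility, or that $\h$ does not depend on $E_t$ as in a static or dynamic regime) to ensure that the extra derivative term either vanishes or can be absorbed via a change of variables; without invoking such a condition, the bookkeeping you allude to will leave residual terms and the claimed EIF will not emerge. Your outline would need to make that step explicit to be a proof rather than a plan.
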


\begin{theorem}\label{theo:dr} For any
  fixed sequence
  $(\m_{1}', \r_{1}', \ldots, \m_{\tau}', \r_{\tau}')$,
  define
  \[\C_{t,s}' =
    \prod_{l=t+1}^{s-1}\r_l'(E_l, V_l),\]
  and
  \[\Remi_t(e_t, v_t) =
    \sum_{s=t+1}^\tau\E\left[\C_{t,s}'\{\r_s'(E_s,V_s) - \r_s(E_s,V_s)\}\{\m_s'(E_s,V_s) - \m_s(E_s,V_s)\}\mid
      E_t = e_t, V_t=v_t\right].\]
  Then we have
  \[\m_t(e_t, v_t) = \E[\psi_{t+1}'(Z)\mid E_t=e_t, V_t=v_t] +
    \Remi_t(e_t, v_t)\] where $\psi_{t+1}'$ is defined as above with
  $\m_s$ and $\r_s$ replaced by $\m_s'$ and $\r_s'$, respectively.
\end{theorem}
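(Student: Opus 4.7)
The plan is to argue by backward induction on $t$, descending from $t = \tau$. At $t = \tau$ the sum defining $\psi'_{\tau+1}$ is empty, so $\psi'_{\tau+1}(Z) = \m'_{\tau+1}(E_{\tau+1}^\h, V_{\tau+1}) = Y$, and $\Remi_\tau$ is likewise an empty sum; both sides of the stated identity then reduce to $\E[Y \mid E_\tau = e_\tau, V_\tau = v_\tau]$, which matches $\m_\tau$ by its recursive definition.

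For the inductive step I would rely on two elementary ingredients. First, the change-of-measure identity
\[\E[f(E_{t+1}^\h, V_{t+1}) \mid E_t, V_t] = \E[\r_{t+1}(E_{t+1}, V_{t+1})\, f(E_{t+1}, V_{t+1}) \mid E_t, V_t],\]
valid for any bounded measurable $f$ by the definition of the density ratio $\r_{t+1}$. Second, peeling off the $s = t+1$ summand in the definition of $\psi'_{t+1}$ yields the one-step recursion
\[\psi'_{t+1}(Z) = \r'_{t+1}(E_{t+1}, V_{t+1})\{\psi'_{t+2}(Z) - \m'_{t+1}(E_{t+1}, V_{t+1})\} + \m'_{t+1}(E_{t+1}^\h, V_{t+1}).\]

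Assuming the claim at $t+1$, starting from $\m_t(e_t, v_t) = \E[\m_{t+1}(E_{t+1}^\h, V_{t+1}) \mid E_t, V_t]$ and combining the first identity, the inductive hypothesis $\m_{t+1} = \E[\psi'_{t+2} \mid E_{t+1}, V_{t+1}] + \Remi_{t+1}$, and the tower property, one obtains $\m_t = \E[\r_{t+1}(\psi'_{t+2} + \Remi_{t+1}) \mid E_t, V_t]$. In parallel, substituting the recursion for $\psi'_{t+1}$ and applying the change-of-measure identity to the $\m'_{t+1}(E_{t+1}^\h, V_{t+1})$ term gives $\E[\psi'_{t+1} \mid E_t, V_t] = \E[\r'_{t+1}\psi'_{t+2} + (\r_{t+1} - \r'_{t+1})\m'_{t+1}(E_{t+1}, V_{t+1}) \mid E_t, V_t]$. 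Subtracting these, and invoking the inductive hypothesis once more to replace $\E[\psi'_{t+2} \mid E_{t+1}, V_{t+1}]$ by $\m_{t+1} - \Remi_{t+1}$, the difference $\m_t - \E[\psi'_{t+1} \mid E_t, V_t]$ collapses to $\E[(\r_{t+1} - \r'_{t+1})(\m_{t+1} - \m'_{t+1}) + \r'_{t+1}\Remi_{t+1} \mid E_t, V_t]$.

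The final step is to identify this expression with $\Remi_t$. Using $\C'_{t,t+1} = 1$ together with the factorization $\C'_{t,s} = \r'_{t+1}\,\C'_{t+1,s}$ valid for $s \geq t+2$, the tower property decomposes $\Remi_t$ as its $s = t+1$ summand, namely $\E[(\r_{t+1} - \r'_{t+1})(\m_{t+1} - \m'_{t+1}) \mid E_t, V_t]$, plus $\E[\r'_{t+1}\Remi_{t+1} \mid E_t, V_t]$, which matches what was obtained above. The only non-trivial hurdle is bookkeeping: aligning the indexing of $\C'$ with the $\psi'$ recursion, and ensuring that the change-of-measure step applied to $E_{t+1}^\h$ produces the true ratio $\r_{t+1}$ rather than the candidate $\r'_{t+1}$ --- this mismatch is precisely what generates the product-of-errors structure responsible for the doubly robust character of the representation.
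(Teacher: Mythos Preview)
Your backward-induction argument is correct. The base case and the one-step recursion for $\psi'_{t+1}$ are right, the change-of-measure identity $\E[f(E_{t+1}^\h,V_{t+1})\mid E_t,V_t]=\E[\r_{t+1}f(E_{t+1},V_{t+1})\mid E_t,V_t]$ is exactly what the density-ratio definition of $\r_{t+1}$ gives (via the tower property through $V_{t+1}$, which contains $(E_t,V_t)$), and the algebra collapsing the difference to $\E[(\r_{t+1}-\r'_{t+1})(\m_{t+1}-\m'_{t+1})+\r'_{t+1}\Remi_{t+1}\mid E_t,V_t]$ checks out line by line. The identification with $\Remi_t$ via $\C'_{t,t+1}=1$ and $\C'_{t,s}=\r'_{t+1}\C'_{t+1,s}$ for $s\geq t+2$ is also correct.

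As for comparison with the paper: there is nothing to compare. The paper does not prove this theorem; it merely quotes it as one of three auxiliary results established in \cite{diaz2021lmtp} and then \emph{applies} it (in the proof of Lemma~\ref{lemma:vm}) to the particular choice $V_t=(D_t,Y_t,L_t)$, $E_t=(A_t,C_t)$. So your proposal supplies a self-contained proof where the present paper gives none. The approach you take---backward induction driven by the telescoping recursion for $\psi'_t$ and the change-of-measure step that produces the true $\r_{t+1}$ against the candidate $\r'_{t+1}$---is the standard route to this kind of sequential doubly robust expansion and is essentially how the cited reference proceeds as well.
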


\section{Identification (Proposition \ref{theo:iden})}
\begin{proof}
  Let $V_t=(D_t, Y_t, L_t)$ and $E_t=(A_t, C_t)$. Let
  $\h(E_t, V_t) = (\d(A_t,H_t), 1)$. This theorem follows from
  application of Theorem~\ref{theo:gcompid} above. First, note that
  for $t=\tau$, we have
  \begin{align*}
    \m_t(e_t^\h,v_t)
    &= \E[Y_{\tau+1}\mid D_t=d_t, Y_t=y_t, H_t=h_t,
      A_t=a_t^\d, C_t=1]\\
    &\corr{= \one\{d_t=0,y_t=1\}\E[Y_{\tau+1}\mid D_t=0, Y_t=C_t=1, H_t=h_t,
      A_t=a_t^\d] + \one\{d_t=1,y_t=1\}}\\
    &\corr{= r_t\q_t(a_t^\d, h_t) + z_t}.
  \end{align*}
  For $t<\tau$, we have
  \begin{align*}
    \m_t(e_t^\h,v_t)
    &= \E[\m_{t+1}(E_{t+1}^\h,V_{t+1})\mid D_t=d_t, Y_t=y_t, H_t=h_t,
      A_t=a_t^\d, C_t=1]\\
    &= \E[R_{t+1}\q_{t+1}(A_{t+1}^\d, H_{t+1}) + Z_{t+1}\mid  D_t=d_t, Y_t=y_t, H_t=h_t,
      A_t=a_t^\d, C_t=1]\\
    &=\corr{ r_t\E[R_{t+1}\q_{t+1}(A_{t+1}^\d, H_{t+1}) + Z_{t+1}\mid R_t=C_t=1, H_t=h_t,
      A_t=a_t^\d]} + z_t\\
    &= \corr{r_t\q_t(a_t^\d, h_t) + z_t}.
  \end{align*}
  Since \corr{$R_1=1$ and $Z_1=0$}, we have $\m_1(E_1^\h,V_1)=\q_1(A_1^\d, H_1)$,
  concluding the proof of the result.
\end{proof}

\section{Efficient influence function}
\begin{proof}
  As before, let $V_t=(D_t, Y_t, L_t)$ and $E_t=(A_t, C_t)$. Let
  $\h(E_t, V_t) = (\d(A_t,H_t), 1)$. First, notice that
  \begin{align*}
    \r_t(e_t, v_t) & = \frac{\g_t^\h(e_t\mid v_t)}{\g_t(e_t\mid
                     v_t)}\\
                   & = \frac{\g_{A,t}^\d(a_t\mid h_t)}{\g_{A,t}(a_t\mid h_t)}\frac{c_t}{\g_{C,t}(a_t,h_t)}.
  \end{align*}
  Then we have
  \begin{align}
    \psi_t(z)
    &= \sum_{s=t}^\tau \left(\prod_{k=t}^s\r_k(e_k,
      v_k)\right)\{\m_{s+1}(e_{s+1}^\h,v_{s+1}) -
      \m_s(e_s,v_s)\} + \m_t(e_t^\h,v_t)\notag\\
    &= \corr{\sum_{s=t}^\tau \left(\prod_{k=t}^s\r_k(e_k,
      v_k)\right)\{r_{s+1}\q_{s+1}(a_{s+1}^\h,h_{s+1}) + z_{s+1}-
      r_s\q_s(a_s,h_s) - z_s\} + r_t\q_t(a_t^\d,h_t) + z_t}\notag\\
    &= \corr{\sum_{s=t}^\tau \left(\prod_{k=t}^s\r_k(e_k,
      v_k)\right)r_s\{r_{s+1}\q_{s+1}(a_{s+1}^\h,h_{s+1}) + z_{s+1} -
      \q_s(a_s,h_s)\} + r_t\q_t(a_t^\d,h_t) + z_t}\notag\\
    &= \corr{\sum_{s=t}^\tau \left(\prod_{k=t}^s\r_k(e_k,
      v_k)r_k\right)\{r_{s+1}\q_{s+1}(a_{s+1}^\h,h_{s+1}) +z_{s+1}-
      \q_s(a_s,h_s)\} + r_t\q_t(a_t^\d,h_t) + z_t}\notag\\
    &= \corr{r_t\left[\sum_{s=t}^\tau \left(\prod_{k=t}^s\w_t(c_k,a_k,
      h_k)\right)\{r_{s+1}\q_{s+1}(a_{s+1}^\h,h_{s+1}) +z_{s+1} -
      \q_s(a_s,h_s)\} + \q_t(a_t^\d,h_t)\right] + z_t}\notag\\
    &=\corr{r_t\varphi_t(x) + z_t}\label{eq:eqeif},
  \end{align}
  \corr{where we used that $r_s=0$ implies $r_{s+1}=0$ and
    $z_s=z_{s+1}$, and $r_s=1$
    implies $z_{s}=0$}. An application of Theorem~\ref{theo:eifgcomp} together with the
  above results for $t=1$ yields the desired result after noticing
  that $R_1=1$.
\end{proof}

\section{von-Mises expansion}
\begin{lemma}[von-Mises expansion]\label{lemma:vm}
  define
  \[\H_{t,s}' =
    \prod_{l=t+1}^{s-1}\frac{\g_{A,t}^{\d'}(A_t\mid H_t)}{\g_{A,t}'(A_t\mid H_t)}\frac{C_t}{\g_{C,t}'(A_t,H_t)},\]
  Let
  {\small\[\Rem_t(a_t,h_t)=\sum_{s=t+1}^\tau\E\left[\H_{t,s}'R_s\{\w_s'(A_s,H_s) -
        \w_s(A_s,H_s)\}\{\q_s'(A_s,H_s) - \q_s(A_s,H_s)\}\mid R_t=C_t=1,
        A_t=a_t, H_t=h_t
      \right].\]}%
  Then, in the event $R_t=1$:
  \[\corr{\q_t(a_t, h_t)=\E[R_{t+1}\varphi_{t+1}'(X) + Z_{t+1}\mid R_t= C_t=1, A_t=a_t, H_t=h_t] + \Rem_t(a_t, h_t).}\]
\end{lemma}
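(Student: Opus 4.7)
The plan is to apply Theorem~\ref{theo:dr} to the LMTP specification used in the two preceding proofs, with $V_t = (D_t, Y_t, L_t)$, $E_t = (A_t, C_t)$, and $\h(E_t, V_t) = (\d(A_t, H_t), 1)$, and then to collapse both the main term and the remainder using identities already derived in the identification and EIF proofs.

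First I would extend the identification calculation to the observed (unintervened) exposure: by the same induction used in the identification proof above, but without substituting $a_t^\d$ for $a_t$ in the conditioning argument, one obtains $\m_t((a_t, 1), v_t) = \q_t(a_t, h_t)$ whenever $r_t = 1$. Next, the algebraic rearrangement that produced~\eqref{eq:eqeif} applies verbatim to the primed nuisances and yields $\psi_{t+1}'(Z) = R_{t+1}\varphi_{t+1}'(X)$. Substituting these two identities into the conclusion of Theorem~\ref{theo:dr} evaluated at $e_t = (a_t, 1)$ with $r_t = 1$ gives
\[
  \q_t(a_t, h_t) = \E[R_{t+1}\varphi_{t+1}'(X) \mid R_t = C_t = 1, A_t = a_t, H_t = h_t] + \Remi_t((a_t, 1), v_t),
\]
so only the identification $\Remi_t = \Rem_t$ remains.

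For the remainder I would rely on three reductions. From the EIF computation, $\r_s(e_s, v_s) = c_s\,\w_s(a_s, h_s)$ and its primed analogue hold, so $\r_s'(E_s, V_s) - \r_s(E_s, V_s) = C_s\{\w_s'(A_s, H_s) - \w_s(A_s, H_s)\}$. The first step above extends to $\m_s'(E_s, V_s) - \m_s(E_s, V_s) = R_s\{\q_s'(A_s, H_s) - \q_s(A_s, H_s)\}$, since both iterated regressions collapse to the same deterministic outcome on $\{R_s = 0\}$. Finally, monotonicity of the at-risk process absorbs every censoring factor in $\C_{t,s}'\cdot C_s\cdot R_s$, collapsing it to $R_s\prod_{l=t+1}^{s-1}\w_l'(A_l, H_l) = R_s\,\H_{t,s}'$. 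Substituting these into $\Remi_t$ term by term in the sum over $s$ produces $\Rem_t(a_t, h_t)$.

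The chief obstacle is the bookkeeping in this last reduction: one must verify carefully that the terminal at-risk indicator $R_s$ absorbs every intermediate censoring factor in $\C_{t,s}'\cdot C_s$ (which relies on monotonicity of the at-risk process) and that $\m_s' - \m_s$ truly vanishes on $\{R_s = 0\}$ (which requires that both primed and unprimed regressions use the same deterministic continuation outside the at-risk set). Once those two points are granted, the argument is a direct substitution into the definition of $\Remi_t$.
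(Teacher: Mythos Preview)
Your proposal is correct and follows essentially the same route as the paper: apply Theorem~\ref{theo:dr} to the specification $V_t=(D_t,Y_t,L_t)$, $E_t=(A_t,C_t)$, $\h(E_t,V_t)=(\d(A_t,H_t),1)$, invoke the identification relation $\m_t=r_t\q_t$ and the EIF identity~\eqref{eq:eqeif} (applied to the primed nuisances) to rewrite both the main term and the remainder. The paper carries the factor $r_t$ through and cancels it at the end, whereas you evaluate directly at $r_t=c_t=1$; your more explicit bookkeeping on how $R_s$ absorbs the intermediate censoring indicators in $\C_{t,s}'\cdot C_s$ and why $\m_s'-\m_s$ vanishes on $\{R_s=0\}$ is exactly the content behind the paper's one-line reduction $\Remi_t=r_t\Rem_t$.
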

\begin{proof}
  This results follows from an application of Theorem~\ref{theo:dr} to
  a data structure $V_t=(D_t, Y_t, L_t)$ and $E_t=(A_t, C_t)$ with
  $\h(E_t, V_t) = (\d(A_t,H_t), 1)$, together with expression
  (\ref{eq:eqeif}). First, note that
  {\scriptsize
    \begin{align*}
      \Remi_t(e_t, v_t)
      & = \sum_{s=t+1}^\tau\E\left[\C_{t,s}'\{\r_s'(E_s,V_s) - \r_s(E_s,V_s)\}\{\m_s'(E_s,V_s) - \m_s(E_s,V_s)\}\mid
        E_t = e_t, V_t=v_t\right]\\
      & = \sum_{s=t+1}^\tau\E\left[\H_{t,s}'R_s\{\w_s'(A_s,H_s) -
        \w_s(A_s,H_s)\}\{\q_s'(A_s,H_s) - \q_s(A_s,H_s)\}\mid D_t=d_t, Y_t=y_t, C_t=c_t,
        A_t=a_t, H_t=h_t
        \right]\\
      & = \sum_{s=t+1}^\tau\E\left[\H_{t,s}'R_s\{\w_s'(A_s,H_s) -
        \w_s(A_s,H_s)\}\{\q_s'(A_s,H_s) - \q_s(A_s,H_s)\}\mid D_t=d_t, Y_t=y_t, C_t=c_t,
        A_t=a_t, H_t=h_t
        \right]\\
      &= r_t\sum_{s=t+1}^\tau\E\left[\H_{t,s}'R_s\{\w_s'(A_s,H_s) -
        \w_s(A_s,H_s)\}\{\q_s'(A_s,H_s) - \q_s(A_s,H_s)\}\mid R_t=C_t=1,
        A_t=a_t, H_t=h_t
        \right]\\
      &= r_t\Rem_t(a_t,h_t)\\
    \end{align*}}%
  Specifically, we have
  \begin{align*}
    \corr{r_t\q_t(a_t, h_t) +z_t}&= \m_t(e_t, h_t)\\
                      &= \E[\psi_{t+1}'(Z)\mid E_t=e_t, V_t=v_t] + \Remi_t(e_t, v_t)\\
                      &= \corr{\E[R_{t+1}\varphi_{t+1}'(X) + Z_{t+1}\mid D_t=d_t, Y_t=y_t, C_t=c_t,
                        A_t=a_t, H_t=h_t] + \Remi_t(e_t, v_t)}\\
                      &= \corr{r_t\E[R_{t+1}\varphi_{t+1}'(X) + Z_{t+1}\mid R_t=
                        C_t=1, A_t=a_t, H_t=h_t] + z_t + r_t\Rem_t(a_t, h_t)},\\
  \end{align*}
  which concludes the proof.
\end{proof}

\section{Targeted Minimum Loss Based Estimator}

The targeted minimum loss-based estimator is computed as a
substitution estimator that uses an estimate $\tilde\q_{1,j(i)}$
carefully constructed to solve the cross-validated efficient influence
function estimating equation
$\Pn \{\D_1(\cdot, \tilde\eta_{j(\cdot)}) - \thetatmle\}= 0$. The
construction of $\tilde\q_{1,j(i)}$ is motivated by the observation
that the efficient influence function of $\theta$ can be expressed as
a sum of terms of the form:
\[\corr{\left(\prod_{k=1}^t \w_k(c_k,a_k,
      h_k)\right)\{r_{t+1}\q_{t+1}(a_{t+1}^\d, h_{t+1}) + z_{t+1} -
  \q_t(a_t, h_t)\}},\] which take the form of score functions
$\lambda(W)\{M - \E(M\mid W)\}$ for appropriately defined variables $M$
and $W$ and some weight function $\lambda$. It is well known that if
$\E(M\mid W)$ is estimated within a weighted generalized linear model
with canonical link that includes an intercept, then the weighted MLE
estimate solves the score equation
$\sum_i\lambda(W_i)\{M_i - \hat\E(M_i\mid W_i)\}=0$. TMLE uses this
observation to iteratively tilt preliminary estimates of $\q_t$
towards a solution of the efficient influence function estimating
equation. The algorithm is defined as follows:

\begin{enumerate}[label=Step \arabic*., align=left, leftmargin=*]
\item Initialize $\tilde\eta =\hat\eta$ and
  $\tilde\q_{\tau+1,j(i)}(A^\d_{\tau+1,i}, H_{\tau+1,i}) = Y_i$.
\item For $s=1,\ldots,\tau$, compute the weights
  \[\lambda_{s,i} = \prod_{k=1}^s \hat\w_{k,j(i)}(C_{k,i},A_{k,i}, H_{k,i})\]  
\item For $t=\tau,\ldots,1$:
  \begin{itemize}
  \item Compute the pseudo-outcome \corr{$\check Y_{t+1,i} =
    R_{t+1}\tilde\q_{t+1,j(i)}(A^\d_{t+1,i}, H_{t+1,i}) + Z_{t+1}$}
  \item Fit the generalized linear tilting model
    \[\link \tilde\q_t^\epsilon(A_{t,i},H_{t,i}) = \epsilon + \link
      \tilde\q_{t,j(i)}(A_{t,1}, H_{t,i})\] where $\link(\cdot)$ is
    the canonical link. Here, $\link \tilde\q_{t,j(i)}(a_t,h_t)$ is an
    offset variable (i.e., a variable with known parameter value equal
    to one). The parameter $\epsilon$ may be estimated by running a
    generalized linear model of the pseudo-outcome
    $\check Y_{t+1,i}$ with only an
    intercept term, an offset term equal to
    $\link \tilde\q_{t,j(i)}(A_{t,i},H_{t,i})$, and weights
    $\lambda_{t,i}$, using all the data points in the sample such that
    $R_{i,t}=C_{i,t}=1$. An outcome bounded in an interval $[a,b]$ may
    be analyzed with logistic regression (i.e., $\link=\logit$) by
    mapping it to an outcome $(0,1)$ through the transformation
    $(Y - a) / (b - a)(1-2\gamma) + \gamma$ for some small value
    $\gamma>0$. This approach has robustness advantages compared to
    fitting a linear model as it guarantees that the predictions in
    the next step remain within the outcome space \citep[see
    ][]{Gruber2010t}.
  \item Let $\hat\epsilon$ denote the maximum likelihood estimate, and
    update the estimates as
    \begin{align*}
      \link \tilde\q_{t, j(i)}^{\hat\epsilon}(A_{t,i},H_{t,i})&=
                                                                \hat\epsilon + \link\tilde\q_{t, j(i)}(A_{t,i},H_{t,i})\\
      \link \tilde\q_{t, j(i)}^{\hat\epsilon}(A_{t,i}^\d,H_{t,i})&=
                                                                   \hat\epsilon + \link\tilde\q_{t, j(i)}(A_{t,i}^\d,H_{t,i}).
    \end{align*}
    The above procedure with canonical link guarantees the following
    score equation is solved: {\small
      \[\frac{1}{n}\sum_{i=1}^n \left(\prod_{k=1}^t \hat\w_{k,j(i)}(C_{k,i},A_{k,i},
          H_{k,i})\right)\{R_{t+1,i}\tilde \q_{t+1,j(i)}(A^\d_{t+1,i}, H_{t+1,i}) -
        \tilde\q_{t, j(i)}^{\hat\epsilon}(A_{t,i},H_{t,i})\}=0\]
    }
  \item Update $\tilde\q_{t, j(i)}= \tilde\q_{t, j(i)}^{\hat\epsilon}$, $t = t-1$, and iterate. 
  \end{itemize}
\item The TMLE is defined as
  \[\hat\theta=\frac{1}{n}\sum_{i=1}^n\tilde\q_{1,j(i)}(A_{1,i}^\d,L_{1,i}).\]
\end{enumerate}
The iterative procedure and the score equation argument above
guarantee that
\[\corr{\frac{1}{n}\sum_{i=1}^n\sum_{t=1}^\tau\left(\prod_{k=1}^t
    \hat\w_{k,j(i)}(C_{k,i},A_{k,i}, H_{k,i})\right)\{R_{t+1,
    i}\tilde\q_{t+1,j(i)}(A_{t+1,i}^\d, H_{t+1,i}) +Z_{t+1}-
  \tilde\q_{t,j(i)}(A_{t,i}, H_{t,i})\}=0,}\] and thus that
$\Pn \{\D_1(\cdot;\tilde\eta_{j(\cdot)}) - \hat\theta\}= 0$. This fact is
crucial to prove weak convergence and consistency of the TMLE.

Arguing along the lines of the proofs presented in
\cite{diaz2021lmtp}, $\sqrt{n}$-consistency of the above TMLE will
require that the weights $\w_t$ as well as their estimators are all
bounded above, and that the second order term
\begin{equation}
  \sum_{t=1}^\tau\lVert \hat\w_t - \w_t\rVert \lVert \tilde\q_t
  - \q_t \rVert
  \label{eq:1}
\end{equation}
is $o_P(n^{-1/2})$. Note that, for each $t$, we have
\begin{align*}
  \lVert \tilde\q_{t-1}
  - \q_{t-1} \rVert&\leq \lVert \tilde\q_{t-1}
                     - \q_{t-1}^\star \rVert + \lVert \tilde\q_{t-1}^\star
                     - \q_{t-1} \rVert\\
                   &=\lVert \tilde\q_{t-1}
                     - \q_{t-1}^\star \rVert + O_P(\lVert \tilde\q_t
                     - \q_t \rVert).
\end{align*}

Recursive application of the above for $t=\tau-1,\ldots,2$ means
that the term (\ref{eq:1}) is bounded in probability by
\[\sum_{t=1}^\tau\left(\lVert \hat\w_t - \w_t\rVert\sum_{s=t}^{\tau}
    \lVert \hat\q_s - \q_s^\star \rVert\right).\]
Convergence of this term at rate $o_P(n^{-1/2})$ seems to require
convergence of cross-product terms of the type $\lVert \hat\w_t - \w_t
\rVert\lVert \hat\q_s - \q_s^\star \rVert$ for $s\geq t$, which are
not required by the SDR estimator (see Theorem~\ref{theo:asrem}
in the main document).
\section{Distribution of censoring weights in the application}
\begin{figure}[H]
  \centering
  \includegraphics[scale=0.5]{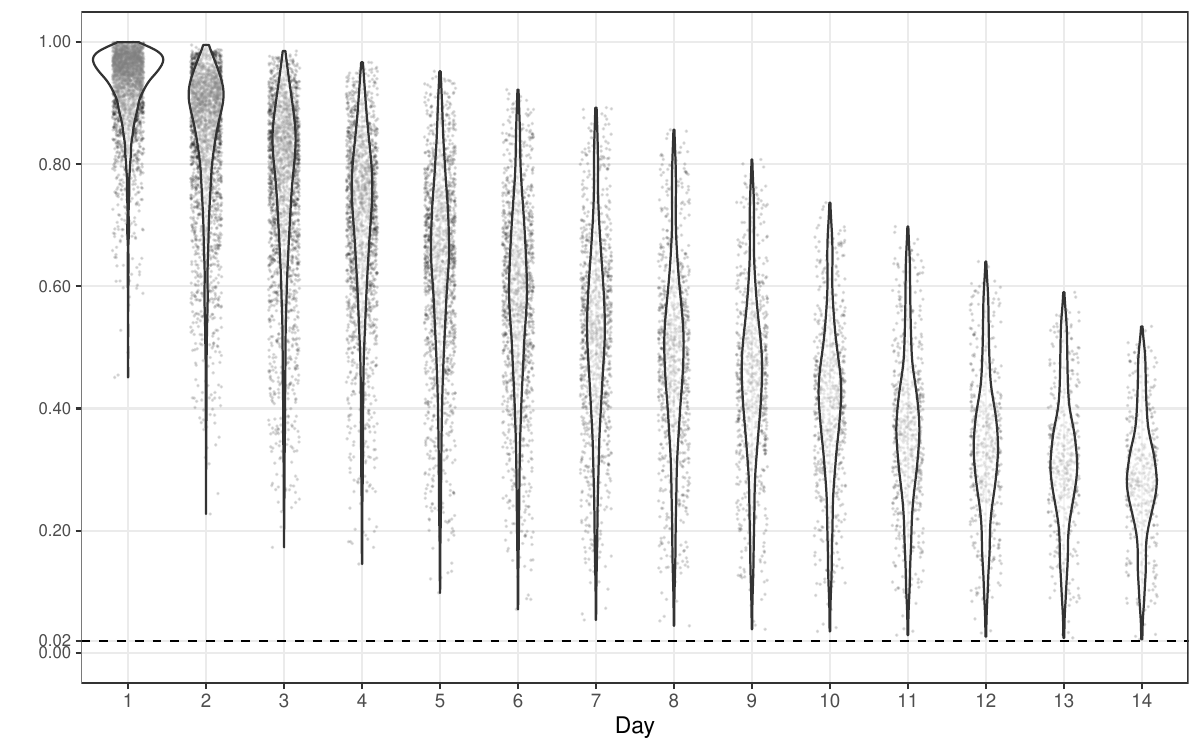}
  \caption{Distribution of the estimated cumulative censoring probabilities
    $\prod_{s=1}^t\hat \g_C(A_{t,i}, H_{t-i})$ for each time $t$.}
  \label{fig:cens}
\end{figure}
\bibliography{refs}
\end{document}